\newif\iflong
\title{Team Semantics for the Specification and Verification of Hyperproperties\footnote{This work was supported by the DFG projects ME4279/1-1 and ``TriCS'' (ZI 1516/1-1).}}
\author{Andreas Krebs$^1$ \and Arne Meier$^2$ \and Jonni Virtema$^3$ \and Martin Zimmermann$^4$}
\date{\small $^1$ Wilhelm-Schickard-Institut für Informatik, Universität Tübingen\\ 72076 Tübingen, Germany\\ \texttt{krebs@informatik.uni-tuebingen.de} \\[.3cm] 
$^2$ Institut für Theoretische Informatik, Leibniz Universität Hannover,\\ Appelstrasse 4, 30167 Hannover, Germany\\ \texttt{meier@thi.uni-hannover.de} \\[.3cm] 
$^3$ Hasselt University, 3590 Diepenbeek, Belgium\\ \texttt{jonni.virtema@uhasselt.be} \\[.3cm] 
	$^4$ Reactive Systems Group, Saarland University,\\ 66123 Saarbrücken, Germany\\ \texttt{zimmermann@react.uni-saarland.de}
}
\newcommand{\LTL}{\protect\ensuremath{\mathrm{LTL}}\xspace}
\newcommand{\CTL}{\protect\ensuremath{\mathrm{CTL}}\xspace}
\newcommand{\dfn}{\mathrel{\mathop:}=}
\newcommand{\ltlopFont}[1]{\mathsf{#1}}
\newcommand{\X}{\ltlopFont{X}}
\newcommand{\F}{\ltlopFont{F}}
\newcommand{\G}{\ltlopFont{G}}
\newcommand{\U}{\ltlopFont{U}}
\newcommand{\R}{\ltlopFont{R}}
\newcommand{\hshift}{-3mm}
\newcommand{\vshift}{1.5mm}
\newcommand{\hshiftstar}{-2.9mm}
\newcommand\nhyltlmodels{\,\mbox{$\not\models\hspace{\hshift}\raisebox{\vshift}{\scriptsize h}\,\,$}\,}
\newcommand\hyltlmodels{\,\mbox{$\models\hspace{\hshift}\raisebox{\vshift}{\scriptsize h}\,\,$}\,}
\newcommand\ltlmodels{\,\mbox{$\models\hspace{\hshift}\raisebox{\vshift}{\scriptsize $c$}\,$}\,} 
\newcommand\amodels{\,\mbox{$\models\hspace{\hshift}\raisebox{\vshift}{\scriptsize a}\,$}\,}
\newcommand\smodels{\,\mbox{$\models\hspace{\hshift}\raisebox{\vshift}{\scriptsize s}\,$}\,}
\newcommand\starmodels{\,\mbox{$\models\hspace{\hshiftstar}\raisebox{\vshift}{$\scriptscriptstyle \star$}\,$}\,}
\newcommand\nstarmodels{\,\mbox{$\not\models\hspace{\hshiftstar}\raisebox{\vshift}{$\scriptscriptstyle \star$}\,$}\,}
\newcommand\nsmodels{\,\mbox{$\not\models\hspace{\hshift}\raisebox{\vshift}{\scriptsize s}\,$}\,}
\newcommand{\N}{\mathbb N}
\newcommand{\ap}{\mathrm{AP}}
\newcommand{\TPCa}{\mathrm{TPC}^{\mathrm{a}}}
\newcommand{\TPCs}{\mathrm{TPC}^{\mathrm{s}}}
\newcommand{\TMCa}{\mathrm{TMC}^{\mathrm{a}}}
\newcommand{\TMCs}{\mathrm{TMC}^{\mathrm{s}}}
\newcommand{\TSAT}{\mathrm{TSAT}}
\newcommand{\TSATa}{\mathrm{TSAT}^{\mathrm{a}}}
\newcommand{\TSATs}{\mathrm{TSAT}^{\mathrm{s}}}
\newcommand{\LTLMC}{\mathrm{LTL\text-MC}}
\newcommand{\LTLPC}{\mathrm{LTL\text-PC}}
\newcommand{\leqcd}{\leq_{\mathrm{cd}}}
\newcommand{\leqpm}{\leq_{\mathrm{m}}^{\mathrm{p}}}
\newcommand{\PSPACE}{\protect\ensuremath{\mathrm{PSPACE}}} 
\newcommand{\ATIME}{\protect\ensuremath{\mathrm{ATIME\text-ALT}}}
\newcommand{\pol}{\protect\ensuremath{\mathrm{pol}}}
\newcommand{\EXPTIME}{\protect\ensuremath{\mathrm{EXPTIME}}}
\newcommand{\NEXPTIME}{\protect\ensuremath{\mathrm{NEXPTIME}}}
\newcommand{\PL}{\protect\ensuremath{\mathrm{PL}}}
\newcommand{\PTime}{\protect\ensuremath{\mathrm{P}}}
\newcommand{\NC}[1]{\protect\ensuremath{\mathrm{NC}^{#1}}}
\newcommand{\AC}[1]{\protect\ensuremath{\mathrm{AC}^{#1}}}
\newcommand{\LOGDCFL}{\protect\ensuremath{\mathrm{logDCFL}}}
\newcommand{\ra}{\mathrm{a}}
\newcommand{\rs}{\mathrm{s}}
\newcommand{\pow}[1]{2^{#1}}
\newcommand{\size}[1]{|#1|}
\newcommand{\set}[1]{\{#1\}}
\renewcommand{\phi}{\varphi}
\renewcommand{\epsilon}{\varepsilon}
\newcommand{\nats}[0]{\mathbb{N}}
\newcommand{\cceq}{\mathop{::=}}
\newcommand{\suffix}[2]{#1[#2,\infty)}
\newcommand{\var}{\mathcal{V}}
\newcommand{\hyltl}{\protect\ensuremath{\mathrm{HyperLTL}}\xspace}
\DeclareMathOperator\dep{\mathrm{dep}}
\newcommand{\kripke}{\mathcal{K}}
\newcommand{\aut}{\mathcal{A}}
\newcommand{\decisionproblem}[3]{
\begin{samepage}
\begin{description}
	\item[Problem:] #1
	\item[Input:] #2 
	\item[Question:] #3?
\end{description}
\end{samepage}
}
\newcommand{\myquot}[1]{``#1''}
\newcommand{\yes}{\protect\ensuremath{\checkmark}}
\newcommand{\no}{\protect\ensuremath{\times}}
\newcommand{\teamup}{\mathcal{T}}
\newcommand{\eval}[1]{\llbracket #1\rrbracket}
\newcommand{\lcm}{\mathrm{lcm}}
\newcommand{\prfx}{\mathrm{prfx}}
\newcommand{\qbfval}{\mathrm{QBF\text-VAL}}
\newcommand{\dom}{\mathrm{Dom}}
\newcommand{\infull}{The details can be found in the full version \cite{DBLP:journals/corr/abs-1709-08510}.}
\tikzset{%
dot/.style={fill,black,circle,inner sep=0mm,minimum width=2mm},
dotwhite/.style={draw,black,fill=white,circle,inner sep=0mm,minimum width=2mm},
dw/.style={draw,black,fill=white,circle,inner sep=0mm,minimum width=3.5mm},
d/.style={draw,black,fill=black,circle,inner sep=0mm,minimum width=3.5mm,text=white}
}
\renewcommand{\decisionproblem}[3]{
\begin{description}\itemsep0mm
	\item[Problem:] #1
	\item[Input:] #2 
	\item[Question:] #3?
\end{description}
}
\renewcommand{\hshift}{-3mm}
\renewcommand{\vshift}{1.75mm}
\renewcommand{\hshiftstar}{-3mm}
\begin{document}

\maketitle

\begin{abstract}
\section*{Abstract}
We develop team semantics for Linear Temporal Logic (LTL) to express hyperproperties, which have recently been identified as a key concept in the verification of information flow properties.
	Conceptually, we consider an asynchronous and a synchronous variant of team semantics.
	We study basic properties of this new logic and classify the computational complexity of its satisfiability, path, and model checking problem.
	Further, we examine how extensions of these basic logics react on adding other atomic operators.
	Finally, we compare its expressivity to the one of HyperLTL, another recently introduced logic for hyperproperties.
	Our results show that LTL under team semantics is a viable alternative to HyperLTL, which complements the expressivity of HyperLTL and has partially better algorithmic properties.\end{abstract}

\section{Introduction}
Guaranteeing security and privacy of user information is a key requirement in software development. 
However, it is also one of the hardest goals to accomplish. 
One reason for this difficulty is that such requirements typically amount to reasoning about the flow of information and relating different execution traces of the system. 
In particular, these requirements are no longer trace properties, i.e., properties whose satisfaction can be verified by considering each trace in isolation. 
For example, the property~\myquot{the system terminates eventually} is satisfied if every trace eventually reaches a final state. 
Formally, a trace property~$\phi$ is a set of traces and a system satisfies $\phi$ if each of its traces is in $\phi$. 

In contrast, the property~\myquot{the system terminates within a bounded amount of time} is no longer a trace property; consider a system that has a trace~$t_n$ for every $n$, so that $t_n$ only reaches a final state after $n$ steps. 
This system does not satisfy the bounded termination property, but each individual trace~$t_n$ could also stem from a system that does satisfy it. 
Thus, satisfaction of the property cannot be verified by considering each trace in isolation. 

Properties with this characteristic were termed \emph{hyperproperties} by Clarkson and Schnei\-der~\cite{DBLP:journals/jcs/ClarksonS10}. 
Formally, a hyperproperty~$\phi$ is a set of sets of traces and a system satisfies $\phi$ if its set of traces is contained in $\phi$. 
The conceptual difference to trace properties allows specifying a much richer landscape of properties including information flow and trace properties.
Further, one can also express specifications for symmetric access to critical resources in distributed protocols and Hamming distances between code words in coding theory~\cite{markusPhD}. 
However, the increase in expressiveness requires novel approaches to specification and~verification. 

\subparagraph*{HyperLTL} 
Trace properties are typically specified in temporal logics, most prominently in Linear Temporal Logic (\LTL)~\cite{Pnueli/1977/TheTemporalLogicOfPrograms}. 
Verification of \LTL specifications is routinely employed in industrial settings and marks one of the most successful applications of formal methods to real-life problems. 
Recently, this work has been extended to hyperproperties: \hyltl, \LTL equipped with trace quantifiers, has been introduced to specify hyperproperties~\cite{DBLP:conf/post/ClarksonFKMRS14}. 
Accordingly, a model of a \hyltl formula is a set of traces and the quantifiers range over these traces. 
This logic is able to express the majority of the information flow properties found in the literature (we refer to Section~3 of~\cite{DBLP:conf/post/ClarksonFKMRS14} for a full list). 
The satisfiability problem for \hyltl is undecidable~\cite{finkbeiner_et_al:LIPIcs:2016:6170} while the model checking problem is decidable, albeit of non-elementary complexity~\cite{DBLP:conf/post/ClarksonFKMRS14,DBLP:conf/cav/FinkbeinerRS15}. 
In view of this, the full logic is too strong. Fortunately most information flow properties found in the literature can be expressed with at most one quantifier alternation and consequently belong to decidable (and tractable) fragments. 
Further works have studied runtime verification~\cite{DBLP:conf/rv/BonakdarpourF16,DBLP:conf/rv/FinkbeinerHST17}, connections to first-order logic~\cite{DBLP:conf/stacs/Finkbeiner017}, provided tool support~\cite{DBLP:conf/cav/FinkbeinerRS15,finkbeiner_et_al:LIPIcs:2016:6170}, and presented applications to \myquot{software doping}~\cite{DBLP:conf/esop/DArgenioBBFH17} and the verification of web-based workflows~\cite{634}. 
In contrast, there are natural properties, e.g., bounded termination, which are not expressible in \hyltl (which is an easy consequence of a much stronger non-expressibility result~\cite{BozzelliMP15}).

\subparagraph*{Team Semantics} Intriguingly, there exists another modern family of logics, \emph{Dependence Logics}~\cite{vaananen07, DKV16}, which operate as well on sets of objects instead of objects alone. 
Informally, these logics extend first-order logic (FO) by atoms expressing, e.g., that \myquot{the value of a variable~$x$ functionally determines the value of a variable~$y$} or that \myquot{the value of a variable~$x$ is informationally independent of the value of a variable~$y$}. 
Obviously, such statements only make sense when being evaluated over a set of assignments. 
In the language of dependence logic, such sets are called \emph{teams} and the semantics is termed \emph{team semantics}.

In 1997, Hodges introduced compositional semantics for Hintikka's Independence-friendly logic~\cite{Hodges97c}.
This can be seen as the cornerstone of the mathematical framework of dependence logics. 
Intuitively, this semantics allows for interpreting a team as a database table.
In this approach, variables of the table correspond to attributes and assignments to rows or records.
In 2007, Väänänen~\cite{vaananen07} introduced his modern approach to such logics and adopted team semantics as a core notion, as dependence atoms are meaningless under Tarskian semantics.

After the introduction of dependence logic, a whole family of logics with different atomic statements have been introduced in this framework: \emph{independence logic}~\cite{gv13} and \emph{inclusion logic}~\cite{Galliani12} being the most prominent.
Interest in these logics is rapidly growing and the research community aims to connect their area to a plethora of disciplines, e.g., linguistics~\cite{dagstuhl15}, biology~\cite{dagstuhl15}, game~\cite{bradfield15}  and social choice theory~\cite{shpitser15}, philosophy~\cite{shpitser15}, and computer science~\cite{dagstuhl15}.
We are the first to exhibit connections to formal languages via application of Büchi automata (see Theorem~\ref{thm:tmcs-splitfree}).
Team semantics has also found their way into modal \cite{va08} and temporal logic~\cite{kmv15}, as well as statistics~\cite{dhkmv16}.

Recently, Krebs et~al.~\cite{kmv15} proposed team semantics for Computation Tree Logic (CTL), where a team consists of worlds of the transition system under consideration. 
They considered synchronous and asynchronous team semantics, which differ in how time evolves in the semantics of the temporal operators. 
They proved that satisfiability is $\EXPTIME$-complete under both semantics while model checking is $\PSPACE$-complete under synchronous semantics and $\PTime$-complete under asynchronous semantics. 

\subparagraph*{Our Contribution}
The conceptual similarities between \hyltl and team semantics raise the question how an \LTL variant under {team semantics} relates to \hyltl.
For this reason, we develop team semantics for \LTL, analyse the complexity of its satisfiability and model checking problems, and subsequently compare the novel logic to \hyltl.

When defining the logic, we follow the approach of Krebs et~al.\ \cite{kmv15} for defining team semantics for \CTL: 
we introduce synchronous and asynchronous team semantics for \LTL, where teams are now sets of traces. 
In particular, as a result, we have to consider potentially uncountable teams, while all previous work on model checking problems for logics under team semantics has been restricted to the realm of finite teams. 

We prove that the satisfiability problem for team $\LTL$ is $\PSPACE$-complete under both semantics, by showing that the problems are equivalent to $\LTL$ satisfiability under classical semantics. 
Generally, we observe that for the basic asynchronous variant all of our investigated problems trivially reduce to and from classical $\LTL$ semantics.
However, for the synchronous semantics this is not the case for two variants of the model checking problem. 
As there are uncountably many traces, we have to represent teams, i.e., sets of traces, in a finitary manner. 
The path checking problem asks to check whether a finite team of ultimately periodic traces satisfies a given formula. 
As our main result, we establish this problem to be $\PSPACE$-complete for synchronous semantics. 
In the (general) model checking problem, a team is represented by a finite transition system. 
Formally, given a transition system and a formula, the model checking problem asks to determine whether the set of traces of the system satisfies the formula. 
For the synchronous case we give a polynomial space algorithm for the model checking problem for the disjunction-free fragment, while we leave open the complexity of the general problem.  
Disjunction plays a special role in team semantics, as it splits a team into two. 
As a result, this operator is commonly called \emph{splitjunction} instead of disjunction.
In our setting, the splitjunction requires us to deal with possibly infinitely many splits of uncountable teams, if a splitjunction is under the scope of a $\G$-operator, which raises interesting language-theoretic questions. 

Further, we study the effects for complexity that follow when our logics are extended by dependence atoms and the contradictory negation.
Finally, we show that \LTL under team semantics is able to specify properties which are not expressible in \hyltl and \emph{vice versa}.

Recall that satisfiability for \hyltl is undecidable and model checking of non-elementary complexity. 
Our results show that similar problems for \LTL under team semantics have a much simpler complexity while some hyperproperties are still expressible (e.g., input determinism, see page~\pageref{pg:input-nonint}, or bounded termination).
This proposes \LTL under team semantics to be a significant alternative for the specification and verification of hyperproperties that complements \hyltl.
%
%
%

\section{Preliminaries}

The non-negative integers are denoted by $\nats$ and the power set of a set~$S$ is denoted by $\pow{S}$. Throughout the paper, we fix a finite set~$\ap$ of atomic propositions.

\subparagraph*{Computational Complexity}
We will make use of standard notions in complexity theory. 
In particular, we will use the complexity classes $\PTime$ and $\PSPACE$.
Most reductions used in the paper are $\leqpm$-reductions, that is, polynomial time, many-to-one reductions.

\subparagraph*{Traces}
 A \emph{trace} over $\ap$ is an infinite sequence from $ (\pow{\ap})^\omega$; a finite trace is a finite sequence from $(\pow{\ap})^*$. 
 The length of a finite trace~$t$ is denoted by $\size{t}$. The empty trace is denoted by $\epsilon$ and the concatenation of two finite traces~$t_0$ and $t_1$ by $t_0t_1$.
 Unless stated otherwise, a trace is always assumed to be infinite. 
 A \emph{team} is a (potentially infinite) set of traces.
 
  Given a trace~$t = t(0) t(1) t(2) \cdots$ and $i \ge 0$, we define $t[i,\infty) \dfn t(i) t(i+1) t(i+2) \cdots$, which we lift to teams~$T \subseteq (\pow{\ap})^\omega$ by defining $T[i,\infty) \dfn \set{ t[i,\infty) \mid t \in T }$. 
 A trace~$t$ is \emph{ultimately periodic}, if it is of the form~$t = t_0 \cdot t_1^\omega = t_0 t_1 t_1 t_1 \cdots$ for two finite traces~$t_0$ and $t_1$ with $\size{t_1}>0$. 
 As a result, an ultimately periodic trace $t$ is finitely represented by the pair~$(t_0, t_1)$; we define $\eval{(t_0, t_1)} = t_0t_1^\omega$. 
 Given a set~$\teamup$ of such pairs, we define $\eval{\teamup} = \set{\eval{(t_0,t_1)} \mid (t_0, t_1) \in \teamup}$, which is a team of ultimately periodic traces. 
 We call $\teamup$ a team encoding of $\eval{\teamup}$. 

\subparagraph*{Linear Temporal Logic}
The formulas of Linear Temporal Logic (\LTL)~\cite{Pnueli/1977/TheTemporalLogicOfPrograms} are defined via the grammar
$
\varphi \cceq 
p\mid \lnot p\mid\varphi\land\varphi\mid\varphi\lor\varphi \mid \X\varphi\mid 
\F\varphi\mid \G\varphi\mid 
\varphi\U\varphi \mid \varphi\R\varphi,
$
where $p$ ranges over the atomic propositions in $\ap$.
The length of a formula is defined to be the number of Boolean and temporal connectives occurring in it. The length of an \LTL formula is often defined to be the number of syntactically different subformulas, which might be exponentially smaller. 
Here, we need to distinguish syntactically equal subformulas which becomes clearer after defining the semantics (see also Example~\ref{example_semantics} afterwards on this).
As we only consider formulas in negation normal form, we use the full set of temporal operators.

Next, we recall the classical semantics of \LTL before we introduce team semantics. 
For traces~$t \in (\pow{\ap})^\omega$ we define
\begin{multicols}{2}
\begin{tabbing}
    $t \ltlmodels  \psi\lor\phi$ \= if \= Rechts \kill
	$t \ltlmodels  p$\> if \>$p\in t(0)$,\\
	$t \ltlmodels  \lnot p$\> if \>$ p\notin t(0)$,\\
	$t \ltlmodels  \psi\land\phi$\> if \>$ t \ltlmodels \psi \text{ and }t \ltlmodels \phi$,\\
	$t \ltlmodels  \psi\lor\phi$\> if \>$ t \ltlmodels \psi \text{ or }t \ltlmodels \phi$,\\
	$t \ltlmodels \X\varphi$\> if \>$t[1,\infty)\ltlmodels\varphi$,\\
	$t\ltlmodels \F\varphi$\> if \>$\exists k \ge 0: t[k,\infty)\ltlmodels \varphi$,\\
	$t\ltlmodels \G\varphi $\> if\> $\forall k\ge 0: t[k,\infty)\ltlmodels \varphi$,\\
	$t\ltlmodels \psi\U\phi$\> if\> $\exists k \ge 0: t[k,\infty)\ltlmodels \phi$ and\\
	\>\quad\quad$\forall k' < k: t[k',\infty)\ltlmodels \psi$,\\ 
	$t\ltlmodels \psi\R\phi$\> if\> $\forall k\ge 0: t[k,\infty)\ltlmodels \phi$ or\\
	\>\quad\quad$\exists k' < k: t[k',\infty)\ltlmodels \psi$.
\end{tabbing}
\end{multicols}
\subparagraph*{Team Semantics for LTL}
Next, we introduce two variants of team semantics for \LTL, which differ in their interpretation of the temporal operators: a synchronous semantics ($\smodels$), where time proceeds in lockstep along all traces of the team, and an asynchronous semantics ($\amodels$) in which, on each trace of the team, time proceeds independently. We write $\starmodels$ whenever a definition coincides for both semantics. 
For teams $T \subseteq (\pow{\ap})^\omega$ let
\begin{multicols}{2}
\begin{tabbing}
    $T\starmodels \psi\land\phi$ \= if \= Rechts \kill
	$T\starmodels p$\> if \> $\forall t \in T: p\in t(0)$,\\
	$T\starmodels \lnot p$\> if \> $\forall t \in T: p\notin t(0)$,\\
	$T\starmodels \psi\land\phi$\> if \>$T\starmodels\psi$  and $T\starmodels\phi$,\\
	$T\starmodels \psi\lor\phi$\> if \> $\exists T_1\cup T_2=T: T_1\starmodels\psi$ and $T_2\starmodels\phi$,\\
	$T\starmodels\X\varphi$\> if \>$T[1,\infty)\starmodels\varphi$.
\end{tabbing}
\end{multicols}
This concludes the cases where both semantics coincide. Next, we present the remaining cases for the synchronous semantics, which are inherited from the classical semantics of \LTL.

\begin{tabbing}
	$T\smodels\psi\U\phi$ \= if \= Rechts\kill
	$T\smodels\F\phi$\> if \> $\exists k \ge 0: T[k,\infty)\smodels\phi$,\\
	$T\smodels\G\phi$\> if \> $\forall k \ge 0: T[k,\infty)\smodels\phi$,\\
	$T\smodels\psi\U\phi$\> if \> $\exists k\ge 0: T[k,\infty)\smodels\phi$ and $\forall k' <k: T[k',\infty)\smodels\psi$, and\\
	$T\smodels\psi\R\phi$\> if \> $\forall k \ge 0: T[k,\infty)\smodels\phi$ or $\exists k'<k: T[k',\infty)\smodels\psi$.
\end{tabbing}

Finally, we present the remaining cases for the asynchronous semantics. 
Note that, here there is no unique timepoint~$k$, but a timepoint $k_t$ for every trace~$t$, i.e., time evolves asynchronously between different traces.


\begin{tabbing}
	$T\amodels\psi\U\phi$ \= if \= Rechts \kill
	$T\amodels\F\phi$\> if \> $\exists k_t \ge 0$, for each  $t \in T$: $\set{t[k_t,\infty) \mid t \in T}\amodels\phi$\\
	$T\amodels\G\phi$\> if \> $\forall k_t\ge 0$, for each $t\in T$ : $\set{t[k_t,\infty) \mid t \in T}\amodels\phi$,\\
	$T\amodels\psi\U\phi$\> if \> $\exists k_t\ge 0$, for each $t\in T$ :  $\set{t[k_t,\infty) \mid t \in T}\amodels\phi$, and\\
	\>\> \quad\quad$\forall k_t' < k_t$,  for each $t\in T$  : $\set{t[k_t',\infty) \mid t \in T}\amodels\psi$, and\\
	$T\amodels\psi\R\phi$\> if \> $\,\forall k_t \ge 0$,  for each $t\in T$ : $\set{t[k_t,\infty) \mid t \in T}\amodels\phi$ or\\
	\>\> \quad\quad$\exists k_t' <  k_t$,  for each $t\in T$: $\set{t[k_t',\infty) \mid t \in T}\amodels\psi$.
\end{tabbing}

We call expressions of the form $\psi\lor\phi$ \emph{splitjunctions} to emphasise on the team semantics where we split a team into two parts.
Similarly, the $\lor$-operator is referred to as a \emph{splitjunction}.

Let us illustrate the difference between synchronous and asynchronous semantics with an example involving the $\F$ operator. Similar examples can be constructed for the other temporal operators (but for $\X$) as well. 

\begin{example}
\label{example_semantics}
Let $T = \set{\set{p} \emptyset^\omega,\emptyset\set{p} \emptyset^\omega}$. We have that $T \amodels \F p$, as we can pick $k_t = 0$ if $t = \set{p} \emptyset^\omega$, and $k_t = 1$ if $t = \emptyset\set{p} \emptyset^\omega$. On the other hand, there is no single~$k$ such that $T[k,\infty) \smodels p$, as the occurrences of $p$ are at different positions. Consequently $T \nsmodels \F p$.

Moreover, consider the formula $\F p\vee\F p$ which is satisfied by $T$ on both semantics. 
However, $\F p$ is not satisfied by $T$ under synchronous semantics.
Accordingly, we need to distinguish the two disjuncts $\F p$ and $\F p$ of $\F p\vee\F p$ to assign them to different teams.
\end{example}

\begin{figure}
	\begin{center}
\scalebox{0.85}{
 \begin{tabular}{llcc}\toprule
 	property & definition & $\amodels$ & $\smodels$\\\midrule
 	empty team property & $\emptyset\starmodels\phi$ & \yes & \yes\\
 	downwards closure & $T\starmodels\phi$ implies $\forall T'\subseteq T$: $T'\starmodels\phi$ & \yes & \yes \\
 	union closure & $T\starmodels\phi,\ T'\starmodels\phi$ implies $T\cup T'\starmodels\phi$ & \yes & \no\\
 	flatness & $T\starmodels\phi$ if and only if $\forall t\in T$: $\{t\}\starmodels\phi$	& \yes & \no\\
 	singleton equivalence & $\{t\}\starmodels\phi$ if and only if $t\ltlmodels\phi$ &\yes & \yes \\\bottomrule
 \end{tabular}
}
\end{center}
\caption{Structural properties overview.}\label{fig:structuralpropertiesoverview}
\end{figure}

In contrast, synchronous satisfaction implies asynchronous satisfaction, i.e., $T \smodels \phi$ implies $T \amodels \phi$. The simplest way to prove this is by applying downward closure, singleton equivalence, and flatness (see Fig.~\ref{fig:structuralpropertiesoverview}). Example~\ref{example_semantics} shows that the converse does not hold.

Next, we define the most important verification problems for \LTL in team semantics setting, namely satisfiability and two variants of the model checking problem: For classical \LTL, one studies the path checking problem and the model checking problem. The difference between these two problems lies in the type of structures one considers. Recall that a model of an \LTL formula is a single trace. In the path checking problem, a trace~$t$ and a formula~$\phi$ are given, and one has to decide whether $t \ltlmodels \phi$. This problem has applications to runtime verification and monitoring of reactive systems~\cite{kf09,ms03}. In the model checking problem, a Kripke structure~$\mathcal{K}$ and a formula~$\phi$ are given, and  one has to decide whether every execution trace~$t$ of $\mathcal{K}$ satisfies $\phi$.

The satisfiability problem of $\LTL$ under team semantics is defined as follows.
 \decisionproblem{LTL satisfiability w.r.t.\ teams ($\TSAT^\star$) for $\star\in\{\ra,\rs\}$.}{\LTL formula $\phi$.}{Is there a non-empty team~$T$ such that $T\starmodels\phi$} 
The non-emptiness condition is necessary, as otherwise every formula is satisfiable due to the empty team property (see Fig.~\ref{fig:structuralpropertiesoverview}).

We consider the generalisation of the path checking problem for \LTL (denoted by $\LTLPC$), which asks for a given ultimately periodic trace~$t$ and a given formula~$\phi$, whether $t \ltlmodels\phi$ holds. In the team semantics setting, the corresponding question is whether a given finite team comprised of ultimately periodic traces satisfies a given formula. Such a team is given by a team encoding~$\teamup$. To simplify our notation, we will write $\teamup\starmodels \varphi$ instead of $\eval{\teamup} \starmodels \varphi$.

 \decisionproblem{TeamPathChecking ($\mathrm{TPC}^\star$) for $\star\in\{\ra,\rs\}$.}{\LTL formula $\phi$ and a finite team encoding~$\teamup$.}{$\teamup\starmodels\phi$}

Consider the generalised model checking problem where one checks whether the team of traces of a Kripke structure satisfies a given formula. This is the natural generalisation of the model checking problem for classical semantics, denoted by $\LTLMC$, which asks, for a given Kripke structure~$\kripke$ and a given \LTL formula~$\phi$, whether $t \ltlmodels \phi$ for every trace $t$ of $\kripke$.

A Kripke structure~$\kripke = (W, R, \eta, w_I)$ consists of a finite set~$W$ of worlds, a left-total transition relation~$R \subseteq W \times W$, a labeling function~$\eta \colon W \rightarrow \pow{\ap}$, and an initial world~$w_I \in W$. A path~$\pi$ through $\kripke$ is an infinite sequence~$\pi = \pi(0) \pi(1) \pi(2) \cdots \in W^\omega$ such that $\pi(0) = w_I$ and $(\pi(i), \pi(i+1)) \in R$ for every $i \ge 0$. The trace of $\pi$ is defined as $t(\pi) = \eta(\pi(0))\eta(\pi(1))\eta(\pi(2)) \cdots \in (\pow{\ap})^\omega$. The Kripke structure~$\kripke$ induces the team~$T(\kripke) = \set{t(\pi) \mid \pi \text{ is a path through }\kripke}$. 

  \decisionproblem{TeamModelChecking ($\mathrm{TMC}^\star$) for $\star\in\{\ra,\rs\}$.}{\LTL formula $\phi$ and a Kripke structure $\kripke$.}{$T(\kripke)\starmodels\phi$}

\section{Basic Properties}

We consider several standard properties of team semantics (cf., e.g.~\cite{DKV16}) and verify which of these hold for our two semantics for \LTL. These properties are later used to analyse the complexity of the satisfiability and model checking problems. To simplify our notation, $\starmodels$ denotes $\amodels$ or $\smodels$. See Figure~\ref{fig:structuralpropertiesoverview} for the definitions of the properties and a summary for which semantics the properties hold. 
%
%
%
%
The positive results follow via simple inductive arguments. For the fact that synchronous semantics is not union closed, consider teams $T = \set{\set{p} \emptyset^\omega}$ and $T' = \set{\emptyset\set{p} \emptyset^\omega}$. Then, we have $T \smodels \F p$ and $T'\smodels \F p$ but $T \cup T' \nsmodels \F p$. Note also that flatness is equivalent of being both downward and union closed.
It turns out that, by Figure~\ref{fig:structuralpropertiesoverview}, $\LTL$ under asynchronous team semantics is essentially classical LTL with a bit of universal quantification: for a team~$T$ and an $\LTL$-formula~$\phi$, we have
$T\amodels \phi$ if and only if $\forall t\in T: t\ltlmodels\phi$.
This however does not mean that $\LTL$ under asynchronous team semantics is not worth of a study; it only means that asynchronous $\LTL$ is essentially classical $\LTL$ if we do not introduce additional atomic formulas that describe properties of teams directly. 
This is a common phenomenon in the team semantics setting. For instance, team semantics of first-order logic has the flatness property, but its extension by so-called \emph{dependence atoms}, is equi-expressive with existential second-order logic \cite{vaananen07}.  
Extensions of $\LTL$  under team semantics are discussed in Section~\ref{sec:extensions}.

At this point, it should not come as a surprise that, due to the flatness property and singleton equivalence, the complexity of satisfiability, path checking, and model checking for $\LTL$ under asynchronous team semantics coincides with those of classical $\LTL$ semantics. 
Firstly, note that an $\LTL$-formula $\phi$ is satisfiable under asynchronous or synchronous team semantics if and only if there is a singleton team that satisfies the formula. 
Secondly, note that to check whether a given team satisfies $\phi$ under asynchronous semantics, it is enough to check whether each trace in the team satisfies $\phi$ under classical $\LTL$; this can be computed by an $\AC0$-circuit using oracle gates for $\LTLPC$. 
Putting these observations together, we obtain the following results from the identical results for $\LTL$ under classical semantics \cite{Kuhtz10,kf09,ms03, SistlaC85}.

The circuit complexity class $\AC{i}$ encompass of polynomial sized circuits of depth $O(\log^i(n))$ and unbounded fan-in; $\NC{i}$ is similarly defined but with bounded fan-in.
A language $A$ is \emph{constant-depth reducible} to a language $B$, in symbols $A\leqcd B$, if there exists a logtime-uniform $\AC0$-circuit family with oracle gates for $B$ that decides membership in $A$.
In this context, \emph{logtime-uniform} means that there exists a deterministic Turing machine that can check the structure of the circuit family $\mathcal C$ in time $O(\log|\mathcal C|)$. For further information on circuit complexity, we refer the reader to the textbook of Vollmer~\cite{vol99}.
Furthermore, $\LOGDCFL$ is the set of languages which are logspace reducible to a deterministic context-free language.

\begin{proposition}\label{prop:TMCa-TSATa-PSPACE}\label{thm:TPCa}$\,$
\begin{enumerate}
	\item $\TMCa$, $\TSATa$, and $\TSATs$ are $\PSPACE$-complete w.r.t.\ $\leqpm$-reductions.
	\item $\TPCa$ is in $\AC1(\LOGDCFL)$ and $\NC1$-hard w.r.t.\ $\leqcd$-reductions.
\end{enumerate}
\end{proposition}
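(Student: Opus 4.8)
The plan is to transfer all stated bounds from the corresponding classical \LTL problems, exploiting the structural properties recorded in Figure~\ref{fig:structuralpropertiesoverview}: the characterisation $T\amodels\phi$ iff $t\ltlmodels\phi$ for every $t\in T$, together with singleton equivalence and downward closure. For the satisfiability results I would first argue that, for both $\star\in\{\ra,\rs\}$, a formula $\phi$ is satisfied by some non-empty team iff it is satisfied by some singleton team; the non-trivial direction uses downward closure to pass from an arbitrary witnessing team (non-empty by the side condition in the definition of $\TSAT^\star$) to any of its singleton subsets. By singleton equivalence a singleton~$\{t\}$ satisfies $\phi$ iff $t\ltlmodels\phi$, so $\phi$ is team-satisfiable under either semantics iff it is satisfiable in classical \LTL. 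The identity map on formulas is therefore a $\leqpm$-reduction in both directions between $\TSATa$ (respectively $\TSATs$) and classical \LTL satisfiability, and the $\PSPACE$-completeness of the latter \cite{SistlaC85} transfers verbatim. For $\TMCa$ the same idea applies at the level of the characterisation: for a Kripke structure~$\kripke$ we have $T(\kripke)\amodels\phi$ iff every trace of $\kripke$ classically satisfies $\phi$, which is exactly the classical question $\LTLMC$; hence $\TMCa$ and $\LTLMC$ coincide up to an identity reduction, giving $\PSPACE$-completeness.

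For part~2, the upper bound rests again on the asynchronous characterisation. Since a team encoding~$\teamup$ describes only finitely many — indeed polynomially many — ultimately periodic traces, deciding $\teamup\amodels\phi$ amounts to checking $t\ltlmodels\phi$ for each $t\in\eval{\teamup}$ and taking the conjunction of the answers. Each such check is an instance of classical path checking $\LTLPC$, and forming a polynomial-size conjunction of oracle answers is an $\AC{0}$ operation, so $\TPCa\leqcd\LTLPC$. Composing this with $\LTLPC\in\AC{1}(\LOGDCFL)$ \cite{Kuhtz10,kf09,ms03} and using that an $\AC{0}$-reduction composes with $\AC{1}(\LOGDCFL)$ without leaving the class yields $\TPCa\in\AC{1}(\LOGDCFL)$. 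For the matching $\NC{1}$-hardness I would reduce from classical $\LTLPC$: given an ultimately periodic trace represented by $(t_0,t_1)$ and a formula~$\phi$, output the singleton team encoding $\teamup=\{(t_0,t_1)\}$ together with~$\phi$. Singleton equivalence gives $\teamup\amodels\phi$ iff $(t_0,t_1)$ classically satisfies $\phi$, and this map is computable by a constant-depth circuit, so the $\NC{1}$-hardness of $\LTLPC$ \cite{Kuhtz10} carries over.

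The conceptual content here is small: once the semantic characterisation of asynchronous satisfaction is in place, every problem collapses to its classical counterpart under an essentially trivial reduction. The two points that need genuine care are, first, the justification that satisfiability may be restricted to singleton teams — which is exactly where the non-emptiness requirement in $\TSAT^\star$ and downward closure interact — and, second, the circuit-complexity bookkeeping in the $\TPCa$ upper bound, namely that the finitely many oracle calls have size polynomial in the input and that the resulting $\AC{0}$-reduction composes cleanly with the $\AC{1}(\LOGDCFL)$ bound for $\LTLPC$ so as to remain in $\AC{1}(\LOGDCFL)$. I expect this composition step to be the main (though still routine) obstacle, since it is the only part of the argument that is not a verbatim identity reduction.
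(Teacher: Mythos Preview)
Your proposal is correct and follows essentially the same approach as the paper: reduce team satisfiability to singleton (hence classical) satisfiability via downward closure and singleton equivalence, identify $\TMCa$ with $\LTLMC$ via flatness, and handle $\TPCa$ by an $\AC{0}$ conjunction of $\LTLPC$ oracle calls. The only difference is that you spell out the $\NC{1}$-hardness reduction and the composition step explicitly, which the paper leaves implicit.
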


%
%
%
%
%
%
%
%

\section{Classification of Decision Problems Under Synchronous Semantics}

In this section, we examine the computational complexity of path and model checking with respect to the synchronous semantics. 
Our main result settles the complexity of $\mathrm{TPC}^\rs$. 
It turns out that this problem is harder than the asynchronous version.

\begin{figure}
\centering
\begin{tikzpicture}[x=1cm,scale=0.65,y=1.2cm]


\node at (0.5,.75) {$U(i)$};

	\foreach \n/\x/\y/\pos/\lab/\col in {1/0/0/180//dotwhite,2/0/-1/180/$\substack{q_i\\\$}$/dot,3/0/-2/180/$\substack{\$}$/dot,4/1/0/0//dot,5/1/-1/0/$\substack{\$}$/dot,6/1/-2/0/$\substack{q_i\\\$\\\#}$/dot}{
		\node[\col,label={\pos:\lab}] (\n) at (\x,\y) {};
	}
	
	\foreach \f/\t in {1/2,2/3,3/4,4/5,5/6}{
		\path[-latex,black] (\f) edge (\t);
	}
	\path[-latex,black] (6) edge[out=45,in=45,looseness=1.5] (1);

	\node at (0,-3) {};
\end{tikzpicture}\hfill
\begin{tikzpicture}[x=1cm,scale=0.65,y=1.2cm]

\node at (.5,1.3) {$E(i)$};

\draw [decorate,decoration={brace,amplitude=5pt,raise=4pt}]
		(-.6,.7) -- (1.8,.7);

\node at (0,.55) {$\scriptstyle T(i,1)$};
\node at (1.25,.55) {$\scriptstyle T(i,0)$};

	\foreach \n/\x/\y/\pos/\lab/\col in {1/0/0/180//dotwhite,2/0/-1/180/$\substack{x_i\\q_i\\\$}$/dot,3/0/-2/180/$\substack{\$\\\#}$/dot,4/1/0/0//dotwhite,5/1/-1/0/$\substack{\$}$/dot,6/1/-2/0/$\substack{x_i, q_i\\\$, \#}$/dot}{
		\node[\col,label={\pos:\lab}] (\n) at (\x,\y) {};
	}
	
	\foreach \f/\t in {1/2,2/3,4/5,5/6}{
		\path[-latex,black] (\f) edge (\t);
	}
	\path[-latex,black] (6) edge[out=45,in=45,looseness=1] (4);
	\path[-latex,black] (3) edge[out=45,in=45,looseness=1] (1);
	
	\node at (0,-3) {};
\end{tikzpicture}\hfill
	\begin{tikzpicture}[x=2.5cm,scale=0.65,y=1.2cm]
		\node[text width=2.5cm,align=left] at (-.15,.75) {\footnotesize if $\ell_{jk}=x_i$,\newline then $L({j,k}):$};
		\node[dotwhite] (1) at (0,0) {};
		\node[dot,label={180:$\substack{x_i\\\$}$}] (2) at (0,-1) {};
		\node[dot,label={180:$\substack{\$\\\#}$}] (3) at (0,-2) {};
		
		\node at (0.65,0) {\footnotesize 1};
		\node at (0.65,-1) {\footnotesize 2};
		\node at (0.65,-2) {\footnotesize 3};

		\node[text width=2.5cm,align=left] at (1.15,.75) {\footnotesize if $\ell_{jk}=\neg x_i$,\newline then $L({j,k}):$};
		\node[dotwhite] (a) at (1,0) {};
		\node[dot,label={0:$\substack{\$}$}] (b) at (1,-1) {};
		\node[dot,label={0:$\substack{x_i\\\$\\\#}$}] (c) at (1,-2) {};

		\foreach \from/\to in {1/2,2/3,a/b,b/c}{
			\path[draw,-latex] (\from) edge (\to);
		}
		
		\path[-latex,black] (c) edge[out=45,in=25,looseness=1] (a);
		\path[-latex,black] (3) edge[out=45,in=25,looseness=1] (1);

		\draw [decorate,decoration={brace,amplitude=5pt,raise=4pt,mirror}]
		(-.1,-2.1) -- (1.1,-2.1) node[xshift=-.65cm,yshift=-.2cm,below,align=center,text width=5cm] {$c_j$ at positions $\{1,2,3\}\setminus\{k\}$};



	\end{tikzpicture}

\caption{Traces for the reduction defined in the proof of Lemma~\ref{lem:TPCs_PSPACEhard}.}\label{fig:tracegadgets}
\end{figure}
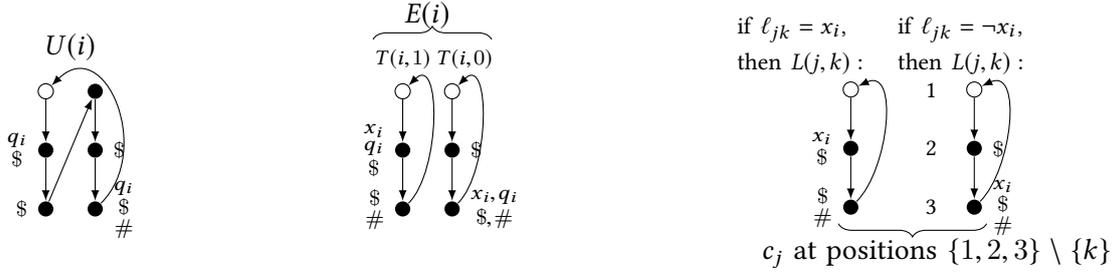

\begin{lemma}\label{lem:TPCs_PSPACEhard}
	$\TPCs$ is $\PSPACE$-hard w.r.t.\ $\leqpm$-reductions.
\end{lemma}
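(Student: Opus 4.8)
The plan is to reduce from the validity problem for quantified Boolean formulas, $\qbfval$, which is $\PSPACE$-complete. Fix an instance $\Psi = Q_1 x_1 Q_2 x_2 \cdots Q_n x_n\, \varphi$ with $Q_i \in \{\exists,\forall\}$ and $\varphi = \bigwedge_{j=1}^m c_j$ a $3$-CNF formula with clauses $c_j = \ell_{j,1} \lor \ell_{j,2} \lor \ell_{j,3}$. I would build, in polynomial time, a finite team encoding $\teamup$ and an $\LTL$ formula $\phi$ such that $\Psi$ is valid if and only if $\eval{\teamup} \smodels \phi$. The design exploits the two kinds of quantification hidden in synchronous team semantics: an atom $p$ is satisfied by a team only if \emph{every} trace carries $p$ (a universal test over the team members), whereas a splitjunction $\psi \lor \chi$ picks \emph{some} partition $T = T_1 \cup T_2$ (an existential choice). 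Accordingly, I will mirror the existential quantifiers of $\Psi$ by splitjunctions of $\phi$ and the universal quantifiers by gadget traces that cannot be split away.

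Concretely, a truth value of $x_i$ is encoded by the \emph{position}, within a short $q_i$-marked block, at which $x_i$ occurs (Figure~\ref{fig:tracegadgets}): value $1$ places $x_i$ early, value $0$ places it late. An existential variable contributes the two traces $T(i,1),T(i,0)$, and the matching splitjunction in $\phi$ routes exactly one of them into the active subteam while discarding the other into a branch carrying only a trivially satisfiable obligation, thereby committing to a value. A universal variable contributes the single trace $U(i)$ whose period runs through both the early and the late $q_i$-slot; since no splitjunction can delete it, the continuation of $\phi$ is forced to hold under both truth values. The markers $\$$ and $\#$ act as a global clock, flagging active positions and block boundaries so that $\X$, $\F$, $\G$ advance all traces through the blocks in lockstep. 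After the quantifier prefix, the literal gadgets $L(j,k)$ together with a clause-checking core use the universal team-atoms to verify that the committed assignment satisfies every $c_j$. As $\phi$ is a prefix of quantifier-mimicking operators followed by a polynomial-size core, the construction is polynomial.

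For correctness I would use the game reading of $\qbfval$. Soundness: a winning strategy of the existential player dictates the splits at the $\exists$-splitjunctions, and an induction over the prefix shows that the induced subteams satisfy the respective subformulas and that the final subteam passes the clause check. Completeness: from any witnessing sequence of splits I extract an existential strategy, and since the unremovable universal gadgets enforce the continuation for \emph{every} value of each universal variable, the extracted strategy wins against all universal moves, so $\Psi$ is valid.

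I expect the completeness direction to be the main obstacle, namely controlling which traces may legally land on which side of each split. Most of the effort should go into proving that the labelling — $q_i$ to locate the relevant block, $x_i$ to read off its value, and $\$/\#$ to keep the clock aligned — makes every \emph{satisfying} split correspond to an honest assignment: one must rule out degenerate splits that peel a universal trace into the discard branch or desynchronise the clause check from the variable blocks, and one must check that position $k$ in $L(j,k)$ lines up exactly with the slot of the variable that $\ell_{j,k}$ mentions. Showing that the synchronous operators forbid precisely these cheats is the delicate, gadget-specific heart of the argument.
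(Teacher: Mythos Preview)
Your proposal follows the paper's approach closely: a reduction from $\qbfval$ via gadget traces $T(i,b)$, $U(i)$, $L(j,k)$ and a formula that mimics the quantifier prefix using splitjunctions for $\exists$ and the markers $\$,\#$ as a clock. The existential case and the clause-checking core are essentially what the paper does.

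The universal case, however, has a genuine gap. Your key claim that ``no splitjunction can delete'' $U(i)$ is false: in team semantics any trace may be routed to either side of a split, so nothing syntactic prevents $U(i)$ from landing in the discard branch exactly like a $T(i,b)$ trace. You flag this worry yourself in the last paragraph, but the mechanism you sketch does not resolve it. A second, related problem is that you let a universal variable contribute \emph{only} $U(i)$; yet in your own encoding the value of $x_i$ is carried to the clause gadgets by which of $T(i,0),T(i,1)$ survives in the active team, so without $E(i)$ the clause check has nothing to read for that variable.

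The paper closes both gaps simultaneously. For $\forall x_i\psi$ it puts $U(i)\cup E(i)$ into the team and uses
\[
f(\forall x_i\psi)=\bigl(\$ \lor (\neg q_i \U q_i) \lor \F[\#\land \X f(\psi)]\bigr)\U \#.
\]
The outer until forces the inner disjunction at every position before the first $\#$; at the two $\$$-free positions $0$ and $3$ the whole team must therefore split across $(\neg q_i\U q_i)$ and $\F[\#\land\X f(\psi)]$. The period-$6$ trace $U(i)$ is designed so that $\{U(i),T(i,1)\}\smodels \neg q_i\U q_i$ at time $0$ but not at time $3$, and conversely for $\{U(i),T(i,0)\}$. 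Since $U(i)$ carries none of the propositions $x_{i'},c_{j},q_{i''}$ needed downstream, it cannot survive in the $\F[\#\land\X f(\psi)]$ branch either; hence at time $0$ the continuation is forced to contain $T(i,0)$ and at time $3$ it is forced to contain $T(i,1)$. The universal effect thus arises from a \emph{time-indexed constraint} interacting with the split, not from an undeletable trace. Your sketch needs an analogous mechanism before the completeness direction can go through.
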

\begin{proof}
Determining whether a given quantified Boolean formula (qBf) is valid ($\qbfval$) is a well-known $\PSPACE$-complete problem \cite{lad77}.
The problem stays $\PSPACE$-complete if the matrix (i.e., the propositional part) of the given qBf is in 3CNF.
To prove the claim of the lemma, we will show that $\qbfval\leqpm \TPCs$.
Given a quantified Boolean formula $\varphi$, we stipulate, w.l.o.g., that $\varphi$ is of the form $\exists x_1\forall x_2\cdots Q x_n\chi$, where $\chi=\bigwedge_{j=1}^m\bigvee_{k=1}^3\ell_{jk}$, $Q\in\{\exists,\forall\}$, and
 $x_1,\dots,x_n$ are exactly the free variables of $\chi$ and pairwise distinct.

In the following we define a reduction which is composed of two functions $f$ and $g$.
Given a qBf $\varphi$, the function $f$ will define an $\LTL$-formula and $g$ will define a team such that $\varphi$ is valid if and only if $g(\varphi)\smodels f(\varphi)$.
Essentially, the team $g(\varphi)$  will contain three kinds of traces, see Figure \ref{fig:tracegadgets}:
(i) traces which are used to mimic universal quantification ($U(i)$ and $E(i)$), (ii) traces that are used to simulate existential quantification ($E(i)$), and (iii) traces used to encode the matrix of $\varphi$ ($L(j,k)$). Moreover the trace $T(i,1)$ ($T(i,0)$, resp.) is used inside the proof to encode an assignment that maps the variable $x_i$ true (false, resp.).
Note that, $U(i), T(i,1), T(i,0), L(j,k)$ are technically singleton sets of traces.
For convenience, we identify them with the traces they contain.

Next we inductively define the reduction function $f$ that maps qBf-formulas to $\LTL$-formulas:
\begin{align*}
	f(\chi) &:= \bigvee_{i=1}^n \F x_i\lor\bigvee_{i=1}^m\F c_i, 
	\intertext{ where $\chi$ is the 3CNF-formula $\bigwedge_{j=1}^m\bigvee_{k=1}^3\ell_{jk}$ with free variables $x_1,\dots, x_n$,}
	f(\exists x_i \psi) &:= (\F q_i)\lor f(\psi),\\
	f(\forall x_i \psi) &:= \bigl(\$ \lor (\lnot q_i\U q_i)\lor \F[\#\land \X f(\psi)]\bigr)\U \#.
\end{align*}

The reduction function $g$ that maps qBf-formulas to teams is defined as follows with respect to the traces in Figure~\ref{fig:tracegadgets}.
\begin{align*}
	g(\chi) &:= \bigcup_{j=1}^mL(j,1)\cup L(j,2)\cup L(j,3), 
\intertext{ where $\chi$ is the 3CNF-formula $\bigwedge_{j=1}^m\bigvee_{k=1}^3\ell_{jk}$ with free variables $x_1,\dots, x_n$ and}
	g(\exists x_i \psi) &:= E(i)\cup g(\psi),\\
	g(\forall x_i \psi) &:= U(i)\cup E(i)\cup g(\psi).
\end{align*}

In Fig.~\ref{fig:tracegadgets}, the first position of each trace is marked with a white circle.
For instance, the trace of $U(i)$ is then encoded via $$(\varepsilon,\emptyset\{q_i,\$\}\{\$\}\emptyset\{\$\}\{q_i,\$,\#\}).$$
The reduction function showing $\qbfval\leqpm \TPCs$ is then $\varphi\mapsto\langle g(\varphi),f(\varphi)\rangle$.
Clearly $f(\varphi)$ and  $g(\varphi)$ can be computed in linear time with respect to $|\varphi|$.

Intuitively, for the existential quantifier case, the formula $(\F q_i)\lor f(\psi)$ allows to continue in $f(\psi)$ with  exactly one of $T(i,1)$ or $T(i,0)$.
If $b\in\{0,1\}$ is a truth value then selecting $T(i,b)$ in the team is the same as setting $x_i$ to $b$.
For the case of $f(\forall x_i\psi)$, the formula $(\lnot q_i\U q_i)\lor \F[\#\land \X f(\psi)]$ with respect to the team $(U(i)\cup E(i))[0,\infty)$ is similar to the existential case choosing $x_i$ to be $1$ whereas for $(U(i)\cup E(i))[3,\infty)$ one selects $x_i$ to be $0$.
The use of the until operator in combination with $\$$ and $\#$ then forces both cases to happen.

Let $\varphi'=Q'x_{n'+1}\cdots Q x_n\chi$, where $Q',Q\in\{\exists,\forall\}$ and let $I$ be an assignment of the variables in $\{x_1,\dots,x_{n'}\}$ for $n'\leq n$. 
Then, let
$$g(I,\varphi') := g(\varphi')\cup\quad\mathclap{\bigcup_{x_i\in\dom(I)}}\quad T(i,I(x_i)).$$
We claim $I\models\varphi'$ if and only if $g(I,\varphi')\smodels f(\varphi')$.

Note that when $\varphi'= \varphi$ it follows that $I=\emptyset$ and that $g(I,\varphi')=g(\varphi)$. 
Accordingly, the lemma follows from the claim of correctness.
The claim is proven by induction on the number of quantifier alternations in $\varphi'$. \iflong
\subparagraph*{Induction basis.}
$\varphi'=\chi$, this implies that $\varphi'$ is quantifier-free and $\dom(I)=\{x_1,\dots,x_n\}$. 
		
``$\Leftarrow$'': 
Let $g(I,\varphi')=T_1\cup T_2$ s.t.\ $T_1\smodels\bigvee_{i=1}^n \F x_i$ and $T_2\smodels\bigvee_{i=1}^n \F c_i$. We assume w.l.o.g.\ $T_1$ and $T_2$ to be disjoint, which is possible due to downwards closure. 
We then have that $T_2\subseteq\{\,L(j,k)\mid 1\leq j\leq m, 1\leq k \leq 3\,\}$ and
$T_1= (\{\,L(j,k)\mid 1\leq j\leq m, 1\leq k \leq 3\,\}\setminus T_2)\cup\{\,T(i,I(x_i))\mid 1\leq i\leq n\,\}$, for, $1\leq i,i'\leq m$, $c_i$ does not appear positively in the trace $T(i',I(x_{i'}))$.
Due to construction of the traces, $L(j,k)\in T_2$ can only satisfy the subformula $\F c_{j'}$ for $j' = j$.
Moreover, note that there exists no $s\in\N$ such that $L(j,k)(s)\ni c_j$ for all $1\leq k\leq 3$; hence $\{L(j,1),L(j,2),L(j,3)\}$ falsifies $\F c_j$.
These two combined imply that $T_2\not\supseteq\{L(j,1),L(j,2),L(j,3)\}$, for each $1\leq j\leq m$.
However, for each $1\leq j\leq m$, any two of $L(j,k)$, $1\leq k\leq 3$, can belong to $T_2$ and hence exactly one belongs to $T_1$.

Now $T_1=T_1^1\cup\cdots\cup T_1^n$ such that $T_1^i\smodels \F x_i$.
Note that
$\F x_i$ can be satisfied by $T(i',I(x_{i'}))$ only for $i' = i$. Since $T_1\supseteq\{\,T(i,I(x_i))\mid 1\leq i\leq n\,\}$, it follows that $T(i,I(x_{i})) \in T_1^i$, for each $1\leq i \leq n$.
Note also that, if $L(j,k)\in T_1$ it has to be in $T_1^i$ where $x_i$ is the variable of $\ell_{j,k}$.
By construction of the traces, if $T(i,1)\in T_1^i$ we have $T_1^i(1)\smodels x_i$ and if $T(i,0)\in T_1^i$ then $T_1^i(2)\smodels x_i$. Thus, by construction of the traces $L(j,k)$, if $L(j,k)\in T_1$ then $I\models\ell_{j,k}$.
Since, for each $1\leq j\leq m$, there is a $1\leq k\leq 3$ such that $L(j,k)\in T_1$ it follows that $I\models\varphi'$.

``$\Rightarrow$'': 
Now assume that $I\models\varphi'$.
As a result, pick for each $1\leq j\leq m$ a \emph{single} $1\leq k\leq 3$ such that $I\models\ell_{jk}$.
Denote this sequence of choices by $k_1,\dots,k_m$.
Choose $g(I,\varphi')=T_1\cup T_2$ as follows:
\begin{align*}
	T_1 &:= \{\,L(j,k_j)\mid  1\leq j\leq m \}\cup\{T(i,I(x_i)\mid 1\leq i\leq n\,\}\\
	T_2 &:= \{\,L(j,1),L(j,2),L(j,3)\mid 1\leq j\leq m\,\}\setminus T_1
\end{align*}
Then $T_2\smodels \bigvee_{j=1}^m \F c_j$, for exactly two traces per clause are in $T_2$, and we can divide $T_2=T_2^1\cup\cdots\cup T_2^m$ where
$$
T_2^j := \{\,L(j,k),L(j,k')\mid k,k'\in\{1,2,3\}\setminus\{k_j\}\,\},
$$
and, by construction of the traces, $T_2^j\smodels \F c_j$, for all $1\leq j\leq m$.
%
Further, note that $T_1=T_1^1\cup\cdots \cup T_1^n$, where
$$
T_1^i \dfn \{\,L(j,k_j)\mid 1\leq j\leq m, I(x_i)\models\ell_{jk}\,\}\cup\{T(i,I(x_i))\}.
$$
There are two possibilities:
\begin{itemize}
	\item $I(x_i)=1$: then $x_i\in(L(j,k_j)(1)\cap T(i,I(x_i))(1))$.
	\item $I(x_i)=0$: then $x_i\in(L(j,k_j)(2)\cap T(i,I(x_i))(2))$. 
\end{itemize}
In both cases, $T_1^i\smodels \F x_i$, and thus $T_1\smodels\bigvee_{i=1}^n \F x_i$.
Hence it follows that $g(I,\varphi')\smodels f(\varphi')$ and the induction basis is proven.
\subparagraph*{Induction Step.} 
		``Case $\varphi'=\exists x_i\psi$.'' 
		We show that $I\models \exists x_i\psi$ if and only if $g(I,\exists x_i\psi)\smodels f(\exists x_i\psi)$. 
		
		First note that $g(I,\exists x_i\psi)\smodels f(\exists x_i\psi)$ iff $E(i)\cup g(\psi)\smodels (\F q_i)\lor f(\psi)$, by the definitions of $f$ and $g$.
		Clearly, $E(i)\nsmodels \F q_i$, but both $T(i,1)\smodels\F q_i$ and $T(i,0)\smodels\F q_i$.
		Observe that $E(i)=\{T(i,1),T(i,0)\}$ and $q_i$ does not appear positively anywhere in $g(\psi)$.
	%
		Accordingly, and by downwards closure, $E(i)\cup g(\psi)\smodels (\F q_i)\lor f(\psi)$ if and only if 
		\begin{equation}
			\exists b\in\{0,1\}:\; T(i,1-b)\smodels \F q_i \text{ and }
			(E(i)\cup g(\psi)) \setminus T(i,1-b) \smodels f(\psi).\label{eq:IS-exists-1}
		\end{equation}
		Since $(E(i)\cup g(\psi)) \setminus T(i,1-b) = T(i,b)\cup g(\psi) = g(I[x_i \mapsto b],\psi)$, Equation~\eqref{eq:IS-exists-1} holds if and only if $g(I[x_i \mapsto b],\psi) \smodels f(\psi)$, for some bit $b\in\{0,1\}$. By the induction hypothesis, the latter holds if and only if there exists a bit $b\in\{0,1\}$ s.t.\ $I[x_i\mapsto b]\models\psi$. Finally by the semantics of $\exists$  this holds if and only if $I\models\exists x_i\psi$.
		
``Case $\varphi'=\forall x_i\psi$.'' 
		We need to show that $I\models \forall x_i\psi$ if and only if $g(I,\forall x_i\psi)\smodels f(\forall x_i\psi)$. 
		
		First note that, by the definitions of $f$ and $g$, we have 
		$$g(I,\forall x_i\psi)\smodels f(\forall x_i\psi)$$ if and only if
\begin{equation}\label{unicase1}
U(i)\cup E(i)\cup g(\psi) \smodels \bigl(\$\! \lor\! (\lnot q_i\U q_i)\!\lor\! \F[\#\!\land\! \X f(\psi)]\bigr)\U \#.
\end{equation}
In the following, we will show that \eqref{unicase1} is true if and only if $T(i,b) \cup  g(\psi) \smodels f(\psi)$ for all $b\in\{0,1\}$. 
From this the correctness follows analogously as in the case for the existential quantifier.

Notice first that each trace in $U(i)\cup E(i)\cup g(\psi)$ is periodic with period length either $3$ or $6$, and exactly the last element of each period is marked by the symbol~$\#$. 
Consequently, it is easy to see that \eqref{unicase1} is true if and only if
\begin{equation}\label{unicase2}
(U(i)\!\cup\! E(i)\!\cup\! g(\psi))[j,\infty) \smodels \$\! \lor\! (\lnot q_i\U q_i)\!\lor\! \F[\#\!\land\! \X f(\psi)],
\end{equation}	
for each $j\in\{0,1,2,3,4\}$.
Note that
$$(U(i)\cup E(i)\cup g(\psi))[j,\infty) \smodels \$,$$ for each $j\in\{1,2,4\}$, whereas no non-empty subteam of $(U(i)\cup E(i)\cup g(\psi))[j,\infty)$, $j\in\{0,3\}$ satisfies $\$$. Accordingly, \eqref{unicase2} is true if and only if
\begin{equation}\label{unicase3}
(U(i)\cup E(i)\cup g(\psi))[j,\infty) \smodels (\lnot q_i\U q_i)\!\lor\! \F[\#\land \X f(\psi)],
\end{equation}
for both $j\in\{0,3\}$.
Note that, by construction, $q_i$ does not occur positively in $g(\psi)$. 
As a result, $X\cap g(\psi)[j,\infty)= \emptyset$, $j\in\{0,3\}$, for all teams $X$ s.t. $X\smodels \lnot q_i\U q_i$.  
Also, none of the symbols $x_{i'}$, $c_{i'}$, $q_{i''}$, for $i'$, $i'' \in \nats$ with $i''\neq i$, occurs positively in $U(i)$. 
On that account,  $X\cap U(i)[j,\infty)= \emptyset$, $j\in\{0,3\}$, for all $X$ s.t.\ $X\smodels \F[\#\land \X f(\psi)]$, for eventually each trace in $X$ will end up in a team that satisfies one of the formulas of the form $\F x_{i'}$, $\F c_{i'}$, or $\F q_{i''}$ (see the inductive definition of $f$).
Moreover, it is easy to check that $(T(i,1)\cup U(i))[0,\infty)\smodels \lnot q_i\U q_i$,  $(T(i,0)\cup U(i))[0,\infty)\nsmodels \lnot q_i\U q_i$, $(T(i,0)\cup U(i))[3,\infty)\smodels \lnot q_i\U q_i$, and $(T(i,1)\cup U(i))[3,\infty)\nsmodels \lnot q_i\U q_i$. 
From these, together with downwards closure, it follows that \eqref{unicase3} is true if and only if for $b_0=1 $ and $b_3=0$
\begin{equation}\label{unicase4}
(U(i)\cup T(i,b_j))[j,\infty) \smodels \lnot q_i\U q_i, \text{ for all $j\in\{0,3\}$}
\end{equation}	
and
\begin{equation}\label{unicase5}
(T(i,1-b_j) \cup g(\psi))[j,\infty) \smodels \F[\#\land \X f(\psi)], 
\end{equation}
for both $j\in\{0,3\}$.
In fact, as \eqref{unicase4} always is the case, \eqref{unicase3} is equivalent with \eqref{unicase5}. 
By construction, \eqref{unicase5} is true if and only if $(T(i,b) \cup g(\psi))[6,\infty) \smodels f(\psi)$, for both $b\in\{0,1\}$. 
Now, since $$(T(i,b) \cup g(\psi))[6,\infty)=T(i,b) \cup g(\psi)$$ the claim applies.
\else
\infull
\fi
\end{proof}

\iflong
Now we turn our attention to proving a matching upper bound. 
To this end, we need to introduce some notation to manipulate team encodings. 
Given a pair~$(t_0,t_1)$ of traces~$t_0 = t_0(0) \cdots t_0(n)$ and $t_1 = t_1(0)\cdots t_1(n')$, we define $(t_0,t_1)[1,\infty)$ to be $(t_0(1) \cdots t_0(n), t_1)$ if $t_0 \neq \epsilon$, and to be $(\epsilon, t_1(1) \cdots t_1(n')t_1(0))$ if $t_0 = \epsilon$. 
Furthermore, we inductively define $(t_0,t_1)[i,\infty)$ to be $(t_0,t_1)$ if $i = 0$, and to be $((t_0,t_1)[1,\infty))[i-1,\infty)$ if $i>0$. 
Then, $$\eval{(t_0,t_1)[i,\infty)} = (\eval{(t_0,t_1)})[i,\infty),$$ that is, we have implemented the prefix-removal operation on the finite representation. Furthermore, we lift this operation to team encodings~$\teamup$ by defining $\teamup[i,\infty) = \set{ (t_0,t_1)[i,\infty) \mid (t_0,t_1) \in \teamup }$. As a result, we have $\eval{\teamup[i,\infty)} = (\eval{\teamup})[i,\infty)$.
 
Given a finite team encoding $\teamup $, let $$\prfx(\teamup) = \max \set{\size{t_0} \mid (t_0, t_1) \in \teamup}$$ and let $\lcm(\teamup)$ be the least common multiple of $\set{\size{t_1} \mid (t_0, t_1) \in \teamup}$. Then, $\teamup[i,\infty) = \teamup[i+\lcm(\teamup),\infty)$ for every $i \ge \prfx(\teamup)$. The next remark is now straightforward.

Furthermore, observe that if $\teamup$ is a finite team encoding and let $i \ge \prfx(\teamup)$. Then, $\teamup[i,\infty)$ and $\teamup[i+\lcm(\teamup),\infty)$ satisfy exactly the same \LTL formulas under synchronous team semantics. 

In particular, we obtain the following consequences for temporal operators (for finite $\teamup$):

\begin{tabbing}
    $\teamup \models \psi \R \phi$ \= iff \= Rechts \kill
	$\teamup \models \F\varphi$ \> iff \> $\exists k \leq \prfx(\teamup)+\lcm(\teamup): \teamup[k, \infty) \models \varphi$.\\
	$\teamup \models \G\varphi$ \> iff \> $\forall k \leq \prfx(\teamup)+\lcm(\teamup): \teamup[k, \infty)\models \varphi$\\
	$\teamup \models \psi \U \phi$ \> iff \> $\exists k \leq \prfx(\teamup)+\lcm(\teamup): \teamup[k, \infty) \models \phi$ and 
	$\forall k' < k: \teamup[k', \infty) \models \phi$\\
	$\teamup \models \psi \R \phi$ \> iff \> $\forall k \leq \prfx(\teamup)+\lcm(\teamup): \teamup[k, \infty) \models \phi$ or
	$\exists k' < k: \teamup[k', \infty) \models \phi$
\end{tabbing}

Accordingly, we can restrict the range of the temporal operators when model checking a finite team encoding. This implies that a straightforward recursive algorithm implementing the synchronous semantics solves $\LTLPC^\rs$.

\begin{lemma}\label{lem:TPCs_in_PSPACE}
	$\TPCs$ is in $\PSPACE$.
\end{lemma}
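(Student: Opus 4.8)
The plan is to turn the semantic characterisation established just before the lemma into an explicit recursive decision procedure and to argue that it runs in polynomial space. Recall that the paragraph preceding the statement showed that for a finite team encoding~$\teamup$, the range of every temporal operator can be truncated at~$\prfx(\teamup)+\lcm(\teamup)$, because $\teamup[i,\infty)$ and $\teamup[i+\lcm(\teamup),\infty)$ satisfy exactly the same formulas for $i \ge \prfx(\teamup)$. The key quantitative observation I would establish first is a \emph{polynomial bound on the representation size of any shifted encoding}: for every $i$, the encoding $\teamup[i,\infty)$ consists of the same number of pairs as $\teamup$, and each pair $(t_0,t_1)[i,\infty)$ has $|t_0|$ bounded by $\prfx(\teamup)$ and $|t_1|$ unchanged. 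Hence every team encoding arising during the computation has size polynomial in the input, and the bound $\prfx(\teamup)+\lcm(\teamup)$ — although $\lcm$ can be exponential in value — is representable in polynomially many bits, which is all that matters for a counter.

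Next I would describe the algorithm itself: a recursive procedure $\mathrm{check}(\teamup,\phi)$ following the synchronous semantics clause by clause. For the atomic cases $p$ and $\lnot p$ we inspect the first symbol of each trace in $\teamup$ directly. For $\psi\land\phi$ we recurse on both conjuncts with the same $\teamup$. For $\X\phi$ we recurse on $\teamup[1,\infty)$. For the temporal operators $\F,\G,\U,\R$ we loop over $k$ from $0$ to $\prfx(\teamup)+\lcm(\teamup)$, reusing a single polynomial-size counter, and recurse on $\teamup[k,\infty)$ according to the truncated characterisation, with the nested quantifier over $k'<k$ handled by a second counter. The only subtle clause is the splitjunction $\psi\lor\phi$: here the semantics asks for the existence of a partition $T_1\cup T_2=\eval{\teamup}$, but since $\eval{\teamup}$ is a finite set of traces, it suffices to existentially guess an assignment of each of the polynomially many pairs in $\teamup$ to $T_1$, to $T_2$, or to both, and recurse on the two resulting subencodings. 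This guess uses polynomially many bits.

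With the algorithm in place, the complexity analysis is a depth-times-width accounting argument. The recursion depth is bounded by the syntactic structure of $\phi$, specifically by the nesting depth of operators, which is at most the length of $\phi$. Each stack frame stores a team encoding of polynomial size together with $O(1)$ polynomial-size counters, so each frame uses polynomial space, and the total space is the product of polynomially many frames with polynomial per-frame cost, hence polynomial. Crucially, the shift operation $\teamup[k,\infty)$ need not store all intermediate shifts: I would either recompute the shifted encoding on demand from $\teamup$ and $k$ (using $\teamup[k,\infty)=\teamup[k\bmod\lcm(\teamup),\infty)$ for large $k$ to keep $k$ small, or simply applying the prefix-removal definition $k$ times while only keeping the current encoding), so that no exponential blow-up in stored data occurs. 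For the splitjunction, the nondeterministic guesses are resolved by appeal to $\NP\subseteq\PSPACE$, or equivalently by deterministically iterating over all $3^{|\teamup|}$ partitions using a polynomial-size index counter.

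The main obstacle I anticipate is bookkeeping the splitjunction inside the scope of the temporal operators without letting the team encodings grow. Each split produces subencodings, and a formula with nested splits under nested $\G$'s could in principle generate a cascade of ever-finer partitions; the point to verify carefully is that every subencoding remains a subset of the pairs of the original $\teamup$ (splitting only partitions the existing finite trace set and never creates new traces, while shifting only rewrites the finitely many pairs without increasing their count), so the representation size stays polynomial throughout the recursion. Once this invariant is nailed down, the $\PSPACE$ bound follows immediately by the standard argument that a recursion of polynomial depth with polynomial-size frames runs in polynomial space, completing the proof that $\TPCs \in \PSPACE$.
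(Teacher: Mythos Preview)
Your proposal is correct and follows essentially the same approach as the paper: a recursive procedure implementing the synchronous semantics clause by clause, truncating the range of the temporal operators at $\prfx(\teamup)+\lcm(\teamup)$, handling splitjunction by iterating over all splits of the finite encoding, and bounding the space by (formula depth) $\times$ (polynomial frame size). Your write-up is in fact more explicit than the paper's about two bookkeeping points---that $\lcm(\teamup)$, though exponential in value, needs only a polynomial-size counter, and that every encoding arising in the recursion has at most $|\teamup|$ pairs of bounded size---both of which the paper leaves implicit.
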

\begin{proof}
Consider Alg.~\ref{algo_pathchecking} where $\lor$ and $\bigvee$ denote classical disjunction, not splitjunction, when combining results from recursive calls.

\SetKw{proce}{Procedure}
\SetKwFunction{pcheck}{chk}
\SetKwFunction{nitram}{nitram}

\LinesNumbered
\begin{algorithm}\caption{Algorithm for $\TPCs$.}\label{algo_pathchecking}
\small
\proce \pcheck{Team encoding $\teamup$, formula $\phi$}\;
	\lIf{$\varphi=p$}{\Return $ \bigwedge_{(t_0,t_1)\in\teamup}\,\, p \in t_0t_1(0)$}
	\lIf{$\varphi=\neg p$}{\Return $ \bigwedge_{(t_0,t_1)\in\teamup}\,\, p \notin t_0t_1(0)$}
	\lIf{$\varphi=\psi\land\psi'$}{\Return \pcheck{$\teamup,\psi$} $\land$ \pcheck{$\teamup,\psi'$}}
	\lIf{$\varphi=\psi\lor\psi'$}{\Return $\bigvee_{\teamup' \subseteq \teamup}$\,\,\pcheck{$\teamup',\psi$} $\land$ \pcheck{$\teamup\setminus \teamup',\psi'$}}
	\lIf{$\varphi=\X\psi$}{\Return \pcheck{$\teamup[1,\infty),\psi$}}
	\lIf{$\varphi=\F\psi$}{\Return $\bigvee_{k \le \prfx(\teamup) + \lcm(\teamup)}$\,\,\pcheck{$\teamup[k,\infty),\psi$}}
	\lIf{$\varphi=\G\psi$}{\Return $\bigwedge_{k \le \prfx(\teamup) + \lcm(\teamup)}$\,\,\pcheck{$\teamup[k,\infty),\psi$}}
	\lIf{$\varphi=\psi\U\psi'$}{\Return $\left(\right.$ $\bigvee_{k \le \prfx(\teamup) + \lcm(\teamup)}$\,\,\pcheck{$\teamup[k,\infty),\psi'$} $\land$ $\bigwedge_{k' < k}$\,\,\pcheck{$\teamup[k',\infty),\psi$}$\left.\right)$}
	\lIf{$\varphi=\psi\R\psi'$}{\Return $\left(\right.$ $\bigwedge_{k \le \prfx(\teamup) + \lcm(\teamup)}$\,\,\pcheck{$\teamup[k,\infty),\psi'$} $\lor$ $\bigvee_{k' < k}$\,\,\pcheck{$\teamup[k',\infty),\psi$}$\left.\right)$}

\end{algorithm}

The algorithm is an implementation of the synchronous team semantics for \LTL with slight restrictions to obtain the desired complexity. In line~5, we only consider strict splits, i.e., the team is split into two disjoint parts. This is sufficient due to downwards closure. Furthermore, the scope of the temporal operators in lines~7 to 10 is restricted to the interval $[0, \prfx(\teamup) + \lcm(\teamup)]$. This is sufficient due to the above observations. 

It remains to analyse the algorithm's space complexity. Its recursion depth is bounded by the size of the formula. Further, in each recursive call, a team encoding has to be stored. Additionally, in lines 5 and 7 to 10, a disjunction or conjunction of exponential arity has to be evaluated. In each case, this only requires linear space in the input to make the recursive calls and to aggregate the return value. 
Thus, Algorithm~\ref{algo_pathchecking} is implementable in polynomial space. 
\end{proof}

Combining Lemma~\ref{lem:TPCs_PSPACEhard} and \ref{lem:TPCs_in_PSPACE} settles the complexity of $\TPCs$.
\else
The matching upper bound follows via a $\PSPACE$ algorithm implementing the semantics in straightforward way. \infull
\fi
\begin{theorem}\label{thm:tpcs}
	$\TPCs$ is $\PSPACE$-complete w.r.t.\ $\leqpm$-reductions.
\end{theorem}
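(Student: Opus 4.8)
The plan is to read off $\PSPACE$-completeness as the conjunction of a matching lower and upper bound, each of which is exactly one of the two preceding lemmas. Concretely, $\PSPACE$-hardness w.r.t.\ $\leqpm$ is Lemma~\ref{lem:TPCs_PSPACEhard} and membership in $\PSPACE$ is Lemma~\ref{lem:TPCs_in_PSPACE}; since the reduction witnessing hardness is computable in (linear, hence) polynomial time and $\qbfval$ is itself $\PSPACE$-complete under $\leqpm$, combining the two lemmas yields $\leqpm$-completeness directly. So the theorem proper is immediate, and the substance lies in setting up and verifying the two independent directions.

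For the lower bound I would reduce from $\qbfval$ with the matrix in 3CNF, simulating the quantifier prefix $\exists x_1 \forall x_2 \cdots Q x_n$ by a formula $f(\varphi)$ and a finite team encoding $g(\varphi)$ so that assigning a truth value to $x_i$ corresponds to deciding, under a splitjunction, which of two dedicated traces $T(i,1), T(i,0)$ survives in the $f(\psi)$-part of the team. Existential quantifiers are handled by the single splitjunction $(\F q_i)\lor f(\psi)$, which forces exactly one of the two traces out; universal quantifiers by the gadget $\bigl(\$\lor(\lnot q_i\U q_i)\lor\F[\#\land\X f(\psi)]\bigr)\U\#$, whose interaction with the period markers $\$$ and $\#$ compels \emph{both} truth values of $x_i$ to be evaluated. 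The matrix is encoded by the clause traces $L(j,k)$, and the top-level formula $\bigvee_i\F x_i\lor\bigvee_j\F c_j$ forces at least one literal per clause to hold. Correctness is proved by induction on the number of quantifier alternations, the base case checking that a satisfying strict split of $g(I,\chi)$ corresponds precisely to an assignment $I$ satisfying the matrix.

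For the upper bound I would give a recursive procedure implementing the synchronous semantics on the encoding $\teamup$ directly; the two features that threaten $\PSPACE$ are the splitjunction, which a priori ranges over exponentially many subteams, and the temporal operators, whose defining quantifiers range over all shifts $k\ge 0$. The first is tamed by downwards closure, so that only strict splits $\teamup=\teamup'\cup(\teamup\setminus\teamup')$ need be tried, evaluated one at a time and aggregated by an on-the-fly Boolean $\bigvee$; the second by the periodicity fact that $\teamup[i,\infty)$ and $\teamup[i+\lcm(\teamup),\infty)$ satisfy the same formulas once $i\ge\prfx(\teamup)$, which lets us cap the shift index $k$ for $\F,\G,\U,\R$ at $\prfx(\teamup)+\lcm(\teamup)$. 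With recursion depth bounded by the formula length and each frame holding only one team encoding plus an index over an exponentially large but linearly describable range, the procedure runs in polynomial space. The main obstacle is thus not the combination but the internal verification of each lemma: on the hardness side, confirming that the $\U$-gadget genuinely forces both assignments of $x_i$ to be tested, which hinges on the exact placement of $\$$ and $\#$ within each period; and on the membership side, justifying that restricting to strict splits loses no witness and that the $\lcm$-based truncation of temporal quantifiers is sound.
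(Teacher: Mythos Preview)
Your proposal is correct and follows essentially the same approach as the paper: the theorem is obtained by combining Lemma~\ref{lem:TPCs_PSPACEhard} and Lemma~\ref{lem:TPCs_in_PSPACE}, and your sketches of both directions---the $\qbfval$ reduction with the $E(i)$/$U(i)$/$L(j,k)$ traces and the $\$,\#$-driven universal gadget, and the recursive $\PSPACE$ algorithm exploiting downwards closure for strict splits and the $\prfx+\lcm$ bound for temporal operators---match the paper's constructions.
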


\iflong
\subsection{Model Checking}
\label{subsec:modelchecking}
\fi

The next theorem deals with model checking of the split\-junc\-tion-free fragment of \LTL under synchronous team semantics.
\begin{theorem}\label{thm:tmcs-splitfree}
$\TMCs$ restricted to splitjunction-free formulas is in $\PSPACE$.
\end{theorem}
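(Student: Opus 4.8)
The goal is to show that model checking a splitjunction-free LTL formula $\phi$ against the team $T(\kripke)$ of a Kripke structure $\kripke$ under synchronous semantics can be done in polynomial space. The plan is to give a recursive algorithm, in the spirit of Algorithm~\ref{algo_pathchecking}, that implements the synchronous team semantics directly, but now operating on a finitary representation of the (in general uncountable) team $T(\kripke)$ rather than on a finite team encoding. The crucial simplification is that with splitjunctions forbidden, the recursion never needs to partition the team, so the only subteams we ever encounter are of the form $T(\kripke)[k,\infty)$ for various offsets $k$.

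**The key representation idea.** First I would observe that $T(\kripke)$ and its shifts $T(\kripke)[k,\infty)$ can each be represented by a Kripke structure (or, more precisely, by a pair: a finite set of admissible ``current worlds'' together with $\kripke$ itself), since $T(\kripke)[k,\infty)$ is the union of the trace sets obtained by starting $\kripke$ from each world reachable from $w_I$ in exactly $k$ steps. Thus every team arising in the recursion is determined by a subset $S \subseteq W$ of worlds, interpreted as $T_S \dfn \bigcup_{w \in S} T(\kripke_w)$, where $\kripke_w$ is $\kripke$ with initial world $w$. Crucially there are only $2^{|W|}$ such subsets, and each is storable in linear space. The shift operation $S \mapsto \set{w' \mid (w,w') \in R \text{ for some } w \in S}$ is computable in polynomial time, so I can navigate between the teams $T[k,\infty)$ by iterating the transition relation over subsets of $W$.

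**Handling the temporal operators.** For the atoms $p$, $\lnot p$ and for $\X$, the semantics reduce to simple checks or a single shift on the world-set $S$. For $\F$, $\G$, $\U$, $\R$ the synchronous semantics quantifies over a single timepoint $k$ applied uniformly to all traces, exactly as in classical LTL. The essential lemma I need is an analogue of the $\prfx/\lcm$ bound used for $\TPCs$: since the sequence of world-sets $S_0 = \set{w_I}, S_1, S_2, \ldots$ obtained by iterated shifting takes values in the finite set $\pow{W}$, it is eventually periodic with pre-period and period bounded by $2^{|W|}$. Hence to decide $\G\psi$ or $\F\psi$ it suffices to range $k$ over $0 \le k \le 2^{|W|}$, and likewise for $\U$ and $\R$. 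I would therefore write the recursive procedure so that each temporal operator loops $k$ (and, for $\U/\R$, the inner index $k'$) up to this exponential bound, aggregating the boolean results of the recursive calls by ordinary conjunction and disjunction (not splitjunction). The counters $k$, $k'$ need only polynomially many bits, and at each step I carry along one world-set $S$, recomputing the appropriate shift on the fly.

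**Space analysis and the main obstacle.** The recursion depth is bounded by $|\phi|$; each stack frame stores one subset $S \subseteq W$ (linear space), two counters of polynomially many bits, and a constant amount of bookkeeping, so the whole computation runs in polynomial space. The main obstacle I anticipate is making the loop bounds rigorous: I must verify that replacing the unbounded quantifier $\forall k \ge 0$ (resp.\ $\exists k \ge 0$) in the synchronous semantics of $\G$, $\F$, $\U$, $\R$ by a bounded search up to $2^{|W|}$ is sound, which rests on proving that the world-set sequence $S_0, S_1, \ldots$ is eventually periodic and that $T_{S_k}$ satisfies exactly the same LTL formulas under $\smodels$ whenever the $S_k$ coincide. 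This is the direct analogue of the $\prfx/\lcm$ periodicity argument preceding Lemma~\ref{lem:TPCs_in_PSPACE}, now phrased over the state space $\pow{W}$ instead of over a finite team encoding; the absence of splitjunctions is exactly what guarantees that this single evolving world-set is all the state the algorithm ever needs to maintain.
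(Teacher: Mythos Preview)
Your approach is correct and delivers the $\PSPACE$ bound, but it takes a different route from the paper. Both arguments start from the same observation: the sequence $S_0 = \set{w_I}, S_1, S_2, \ldots$ of reachable world-sets is ultimately periodic with pre-period plus period at most $2^{|W|}$. From there you recurse directly on world-sets, mirroring Algorithm~\ref{algo_pathchecking} with the loop bounds replaced by $2^{|W|}$; this is elementary and self-contained. The paper instead compresses the whole sequence into a \emph{single} ultimately periodic trace $t$ over the doubled alphabet $\ap \cup \set{\overline{p} \mid p \in \ap}$, where $t(i)$ records which propositions hold at every world of $S_i$ and which fail at every world, and proves $T(\kripke) \smodels \phi$ iff $t \ltlmodels \overline{\phi}$ under classical semantics (with each $\neg p$ replaced by the fresh letter $\overline{p}$). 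Because $t$ may be exponentially long, the paper then checks $t \ltlmodels \overline{\phi}$ by building the B\"uchi automaton for $\overline{\phi}$ and simulating it on $t$ on the fly in nondeterministic polynomial space. What the paper's detour buys is twofold: it establishes the advertised link between team semantics and automata-theoretic methods, and the reduction to classical \LTL immediately accommodates arbitrary (non-atomic) negation for free, whereas your recursion would need explicit dualisation rules to match that.
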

\begin{proof}
Fix $\kripke = (W, R, \eta, w_I)$ and a splitjunction-free formula~$\phi$. We define $S_0 = \set{w_I}$ and $S_{i+1} = \set{w' \in W \mid (w,w') \in R \text{ for some }w \in S_i}$ for all $i \ge 0$. By the pigeonhole principle, this sequence is ultimately periodic with a characteristic~$(s,p)$ with $s+p \le 2^{\size{W}}$.\footnote{The characteristic of an encoding~$(t_0, t_1)$ of an ultimately periodic trace~$t_0 t_1t_1 t_1 \cdots$ is the pair~$(\size{t_0}, \size{t_1})$. Slightly abusively, we say that $(\size{t_0}, \size{t_1})$ is the characteristic of $t_0 t_1t_1 t_1 \cdots$, although this is not unique.} Next, we define a trace~$t$ over $\ap \cup \set{\overline{p} \mid p \in \ap}$ via
\[
t(i) =\, \set{p \in \ap \mid p \in \eta(w) \text{ for all } w \in S_i} \cup
\set{\overline{p} \mid p \notin \eta(w) \text{ for all } w \in S_i}	
\]
that reflects the team semantics of (negated) atomic formulas, which have to hold in every element of the team. 

An induction over the construction of $\phi$ shows that $T(\kripke) \smodels \phi$ if and only if $t \ltlmodels \overline{\phi}$, where $\overline{\phi}$ is obtained from $\phi$ by replacing each negated atomic proposition~$\neg p$ by $\overline{p}$. To conclude the proof, we show that $t \ltlmodels \overline{\phi}$ can be checked in non-deterministic polynomial space, exploiting the fact that $t$ is ultimately periodic and of the same characteristic as $S_0S_1S_2 \cdots$. However, as $s+p$ might be exponential, we cannot just construct a finite representation of $t$ of characteristic~$(s, p)$ and then check satisfaction in polynomial space.

Instead, we present an on-the-fly approach which is inspired by similar algorithms in the literature. It is based on two properties:
\begin{enumerate}
	\item Every $S_i$ can be represented in polynomial space, and from $S_i$ one can compute $S_{i+1}$ in polynomial time. 
	\item For every \LTL formula~${\overline{\phi}}$, there is an equivalent non-deter\-mi\-nis\-tic Büchi automaton~$\aut_{\overline{\phi}}$ of exponential size (see, e.g., \cite{BaierKatoen08} for a formal definition of Büchi automata and for the construction of $\aut_{\overline{\phi}}$). States of $\aut_{\overline{\phi}}$ can be represented in polynomial space and given two states, one can check in polynomial time, whether one is a successor of the other. 
\end{enumerate}
These properties allow us to construct both $t$ and a run of $\aut_{\overline{\phi}}$ on $t$ on the fly.
\iflong
In detail, the algorithm works as follows. It guesses a set~$S^* \subseteq W$ and a state~$q^*$ of $\aut_{\overline{\phi}}$ and checks whether there are $i < j$ satisfying the following properties:
\begin{itemize}
	\item $S^* = S_i = S_j$,
	\item $q^*$ is reachable from the initial state of $\aut_{\overline{\phi}}$ by some run on the prefix~$t(0) \cdots t(i)$, and
	\item $q^*$ is reachable from $q^*$ by some run on the infix~$t(i+1) \cdots t(j)$. This run has to visit at least one accepting state. 
\end{itemize}
By an application of the pigeonhole principle, we can assume w.l.o.g.\ that $j$ is at most exponential in $\size{W}$ and in $\size{\phi}$.

Let us argue that these properties can be checked in non-deter\-mi\-nis\-tic polynomial space. Given some guessed $S^*$, we can check the existence of $i<j$ as required by computing the sequence~$S_0 S_1 S_2\cdots$ on-the-fly, i.e., by just keeping the current set in memory, comparing it to $S^*$, then computing its successor, and then discarding the current set. While checking these reachability properties, the algorithm also guesses corresponding runs as required in the second and third property. As argued above, both tasks can be implemented in non-deterministic space. To ensure termination, we stop this search when the exponential upper bound on $j$ is reached. This is possible using a counter with polynomially many bits and does not compromise completeness, as argued above. 

It remains to argue that the algorithm is correct. First, assume $t \ltlmodels \overline{\phi}$, which implies that $\aut_{\overline{\phi}}$ has an accepting run on $t$. Recall that $t$ is ultimately periodic with characteristic~$(s,p)$ such that $s+p \le 2^{\size{W}}$ and that $\aut_{\overline{\phi}}$ is of exponential size. As a result, a pumping argument yields~$i <j$ with the desired properties.

Secondly, assume the algorithm finds~$i<j$ with the desired properties. Then, the run to $q$ and the one from $q$ to $q$ can be turned into an accepting run of $\aut_{\overline{\phi}}$ on $t$. That being so, $t \ltlmodels{\overline{\phi}}$.\else\infull\fi
\end{proof}
\iflong
Note that our algorithm is even able to deal with arbitrary negations, as long as we disallow splitjunctions.
\fi

The complexity of general model checking problem is left open. It is trivially $\PSPACE$-hard, due to Theorem~\ref{thm:tpcs} and the fact that finite teams of ultimately periodic traces can be represented by Kripke structures.
However, the problem is potentially much harder, as one has to deal with infinitely many splits of possibly uncountable teams with non-periodic traces, if a split occurs under the scope of a $\G$-operator. Currently, we are working on interesting language-theoretic problems one encounters when trying to generalise our algorithms for the general path checking problem and for the splitjunction-free model checking problem, e.g., how complex can an \LTL-definable split be, if the team to be split is one induced by a Kripke structure.

\newcommand{\agreeson}[3]{#2\overset{#1}{\Leftrightarrow}#3}
\section{Extensions}\label{sec:extensions}
In this section we take a brief look into extensions of our logics by dependence atoms and contradictory negation.
Contradictory negation combined with team semantics allows for powerful constructions.
For instance, the complexity of model checking for propositional logic jumps from $\NC{1}$ to $\PSPACE$ \cite{muellerDiss}, whereas the complexity of validity and satisfiability jumps all the way to alternating exponential time with polynomially many alternations ($\ATIME(\exp,\pol)$) \cite{Hannula:2018:CPL:3176362.3157054}.

Formally, we define that $T\starmodels\!\!\sim\!\varphi$ if $T\nstarmodels\varphi$.
Note that the negation $\sim$ is not equivalent to the negation $\neg$ of atomic propositions defined earlier, i.e., $\sim\!\! p$ and $\neg p$ are not equivalent.
In the following, problems of the form $\TPCa(\sim)$, etc., refer to \LTL-formulas with negation $\sim$.

Also, we are interested in atoms expressible in first-order (FO) logic over the atomic propositions; 
the most widely studied ones are dependence, independence, and inclusion atoms \cite{DKV16}.
The notion of generalised atoms in the setting of first-order team semantics was introduced by Kuusisto \cite{kuusisto15}.
It turns out that \iflong Algorithm~\ref{algo_pathchecking}\else the algorithm for $\TPCs$\ \fi%
is very robust to such strengthenings of the logic under consideration. 

We consider FO-formulas over the signature $(A_p)_{p \in \ap}$, where each $A_p$ is a unary predicate. Furthermore, we interpret a team~$T$ as a relational structure~$\mathfrak{A}(T)$ over the same signature with universe~$T$ such that $t \in T$ is in $ A^{\mathfrak A}_{p}$  if and only if $p \in t(0)$. The formulas then express properties of the atomic propositions holding in the initial positions of traces in $T$.
An FO-formula~$\varphi$ \emph{FO-defines} the atomic formula~$D$ with $T\starmodels D \Longleftrightarrow \mathfrak A(T)\models\varphi$. 
In this case, $D$ is also called an \emph{FO-definable} \emph{generalised atom}.

For instance, the dependence atom $\dep(x;y)$ is FO-definable by 
$
\forall t\forall t'
((A_{x}(t)\leftrightarrow A_{x}(t'))\to\bigl(A_y(t)\leftrightarrow A_y(t')))$, for $x,y \in \ap$.
We call an $\LTL$-formula extended by a generalised atom~$D$ an $\LTL(D)$-formula.
Similarly, we lift this notion to \emph{sets of generalised atoms} as well as to the corresponding decision problems, i.e., $\TPCs(D)$ is the path checking problem over synchronous semantics with $\LTL$ formulas which may use the generalised atom $D$.

The result of Theorem \ref{thm:tpcs} can be extended to facilitate also the contradictory negation and first-order definable generalised atoms.	
\iflong
\begin{theorem}
	Let $\mathcal{D}$ be a finite set of first-order definable generalised atoms.
	Then $\TPCs(\mathcal D)$ is $\PSPACE$-complete w.r.t.\ $\leqpm$-reductions.
\end{theorem}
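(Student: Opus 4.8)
The plan is to extend the $\PSPACE$ upper bound of Theorem~\ref{thm:tpcs} (and its matching lower bound) to the logic enriched with a finite set~$\mathcal{D}$ of first-order definable generalised atoms and the contradictory negation~$\sim$. Hardness is immediate: $\TPCs$ is already $\PSPACE$-hard by Lemma~\ref{lem:TPCs_PSPACEhard}, and since plain $\LTL$ is a syntactic fragment of $\LTL(\mathcal{D})$, we inherit $\PSPACE$-hardness for free. Thus the whole content of the theorem is the matching $\PSPACE$ upper bound, which I would obtain by augmenting Algorithm~\ref{algo_pathchecking} with two new cases, one for each generalised atom~$D \in \mathcal{D}$ and one for the negation~$\sim$.

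First I would handle the generalised atoms. For each $D \in \mathcal{D}$, fix the defining FO-formula~$\varphi_D$ over the signature~$(A_p)_{p \in \ap}$. When the algorithm is called on a subformula~$D$ with current team encoding~$\teamup$, it must decide whether $\mathfrak{A}(\eval{\teamup}) \models \varphi_D$. The key observation is that $\varphi_D$ only depends on the initial positions of the traces, i.e.\ on the sets $\{p \mid p \in t_0t_1(0)\}$ for $(t_0,t_1) \in \teamup$, so the relevant relational structure~$\mathfrak{A}(\eval{\teamup})$ has universe of size~$|\teamup|$ (polynomial in the input) and each unary predicate~$A_p^{\mathfrak{A}}$ is computable in polynomial time by inspecting these first symbols. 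Model checking a fixed FO-sentence over an explicitly given finite structure is solvable in space polynomial in the structure (indeed, for fixed formula even in logarithmic space via the standard nested-quantifier evaluation), so this new base case runs in polynomial space. Since $\mathcal{D}$ is finite and each~$\varphi_D$ is a fixed sentence, the added cost per atom is polynomial, and the overall recursion depth is still bounded by~$\size{\phi}$.

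Next I would add the negation case~$\sim\!\psi$. By definition $\teamup \models \;\sim\!\psi$ iff $\teamup \nmodels \psi$, so the recursive call simply negates the Boolean return value of \texttt{chk}$(\teamup,\psi)$. This is sound because the entire procedure is a faithful, terminating decision procedure for $\teamup \models \psi$: the range restriction to $[0,\prfx(\teamup)+\lcm(\teamup)]$ for the temporal operators remains valid (the periodicity argument preceding Lemma~\ref{lem:TPCs_in_PSPACE} is a property of the team encoding and is insensitive to which connectives label the formula), and the restriction to strict splits in the splitjunction case is still justified by downwards closure. One subtlety worth flagging: under the contradictory negation the logic is no longer downwards closed, so I must double-check that the splitjunction case still considers \emph{all} subteam splits rather than relying on closure to prune them. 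In Algorithm~\ref{algo_pathchecking} the splitjunction clause already iterates over all $\teamup' \subseteq \teamup$, and downwards closure was only invoked to argue that strict (disjoint) splits suffice; this reasoning is about the \emph{operator} $\lor$ itself, whose synchronous semantics is unchanged, so it survives the addition of~$\sim$ elsewhere in the formula.

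The space analysis is then essentially that of Lemma~\ref{lem:TPCs_in_PSPACE}: recursion depth is bounded by~$\size{\phi}$, each stack frame stores a team encoding (polynomial) together with a loop counter of polynomially many bits, and the two new cases each add only polynomial space. The hard part, if any, is the bookkeeping in the negation case: I expect the main obstacle to be arguing rigorously that \texttt{chk} computes the \emph{exact} truth value $\teamup \models \psi$ (not merely a one-sided approximation) even in the presence of the range restrictions, so that negating its output is legitimate. This amounts to verifying that the remark preceding the algorithm --- that $\teamup[i,\infty)$ and $\teamup[i+\lcm(\teamup),\infty)$ satisfy the same formulas for $i \ge \prfx(\teamup)$ --- continues to hold for $\LTL(\mathcal{D},\sim)$, which follows by the same induction, now with additional base cases for the atoms in~$\mathcal{D}$ (whose truth depends only on initial positions and is therefore manifestly periodicity-respecting) and an inductive case for~$\sim$ (immediate, since it preserves equalities of truth values). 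Hence the algorithm decides $\TPCs(\mathcal{D})$ in $\PSPACE$, and together with the inherited hardness this yields $\PSPACE$-completeness.
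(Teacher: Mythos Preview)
For the stated theorem your approach is correct and matches the paper's: add a base case to Algorithm~\ref{algo_pathchecking} that evaluates the fixed FO-sentence~$\varphi_D$ on the finite structure~$\mathfrak{A}(\eval{\teamup})$, using that FO model checking over an explicitly given finite structure is in logspace; hardness is inherited from Lemma~\ref{lem:TPCs_PSPACEhard}.

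Your treatment of~$\sim$ is extraneous here --- the theorem concerns $\TPCs(\mathcal{D})$ only, and the paper handles $\TPCs(\sim)$ separately as Theorem~\ref{thm:tpcs-negation-pspacecomplete} --- and the argument you give for it is flawed. You claim the restriction to disjoint splits in line~5 ``is about the operator~$\lor$ itself'' and therefore survives the addition of~$\sim$; it does not. The justification relies on the \emph{subformulas}~$\psi,\psi'$ being downwards closed, so that an overlapping split~$(T_1,T_2)$ may be replaced by the disjoint~$(T_1, T_2\setminus T_1)$. Once~$\sim$ occurs inside~$\psi$ or~$\psi'$ this fails: $\{t\}\smodels(\sim\!\bot)\lor(\sim\!\bot)$ via the overlapping split $T_1=T_2=\{t\}$, yet no disjoint split witnesses it. The repair is trivial --- iterate over all pairs~$(\teamup_1,\teamup_2)$ with $\teamup_1\cup\teamup_2=\teamup$, still only exponentially many and hence enumerable in polynomial space --- but the reasoning as written is incorrect. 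The same caveat in fact applies already to any atom in~$\mathcal{D}$ that is not downwards closed (e.g.\ inclusion atoms); neither you nor the paper spell this out, but the same easy fix covers it and does not affect the $\PSPACE$ bound.
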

\begin{proof}
The lower bound applies from Theorem~\ref{thm:tpcs}. 
For the upper bound, we extend the algorithm stated in the proof of Lemma~\ref{lem:TPCs_in_PSPACE} for the cases of FO-definable atoms.
Whenever such an atom $D$ appears in the computation of the algorithm, we need to solve an FO model checking problem.
As FO model checking is solvable in logarithmic space \cite{Immerman1998} the theorem follows.
\end{proof}

Alg.~\ref{algo_pathchecking} for $\TPCs$ can be straightforwardly extended to deal with contradictory negations without a price in terms of complexity. 

 \begin{theorem}\label{thm:tpcs-negation-pspacecomplete}
  	$\TPCs(\sim)$ is $\PSPACE$-complete w.r.t.\ $\leqpm$-reductions.
  \end{theorem}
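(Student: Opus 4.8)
The plan is to prove that $\TPCs(\sim)$ is $\PSPACE$-complete by establishing both bounds. The lower bound is immediate: since $\LTL$ without the contradictory negation is a syntactic fragment of $\LTL(\sim)$, the $\PSPACE$-hardness of $\TPCs$ from Theorem~\ref{thm:tpcs} transfers directly. So the entire work lies in the upper bound, where I must extend Algorithm~\ref{algo_pathchecking} to handle the new operator~$\sim$ while staying in polynomial space.

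For the upper bound, I would add a single new case to the recursive procedure \pcheck: when $\varphi = {\sim}\psi$, return the negation of the Boolean value returned by the recursive call \pcheck{$\teamup,\psi$}. This is sound precisely because the semantics of~$\sim$ is defined by $T\smodels{\sim}\psi$ iff $T\nsmodels\psi$, so flipping the truth value of the recursive computation correctly implements the contradictory negation. The key observation making this work is that Algorithm~\ref{algo_pathchecking} decides membership exactly, not merely semi-decides it: for every team encoding~$\teamup$ and formula~$\psi$, the call \pcheck{$\teamup,\psi$} returns \textbf{true} if and only if $\teamup\smodels\psi$, and it always terminates. Hence negating its output yields exactly the truth value of~$\teamup\smodels{\sim}\psi$.

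The subtle point I would need to address carefully is the space complexity, since $\PSPACE$ is closed under complementation but one must confirm that introducing~$\sim$ does not secretly blow up the resource usage. The recursion depth remains bounded by the size of the formula (each $\sim$ contributes one to the depth just like any other connective), and the new case stores no additional team encoding beyond what the recursive call already uses; it merely inverts a single bit upon return. Thus the space analysis from the proof of Lemma~\ref{lem:TPCs_in_PSPACE} carries over verbatim: in each frame we keep a team encoding and a constant amount of bookkeeping, and the total recursion depth times the per-frame space remains polynomial. The only real obstacle worth flagging is ensuring that the range restrictions on temporal operators (the bound $\prfx(\teamup)+\lcm(\teamup)$) and the strict-split restriction in line~5 remain valid in the presence of~$\sim$; these rested on downward closure and on the periodicity observations for finite team encodings, and since~$\sim$ can destroy downward closure, I would verify that the temporal-operator and splitjunction cases still compute the exact synchronous semantics rather than relying on downward closure for their \emph{correctness} as opposed to merely their \emph{optimisation}. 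In fact the periodicity bounds depend only on the finite encoding~$\teamup$ and hold unconditionally, while the strict-split restriction in the splitjunction case is the one place where downward closure is genuinely invoked; I would replace it with the full (non-strict) enumeration of covers $\teamup'\cup\teamup''=\teamup$ for formulas containing~$\sim$, which still ranges over only exponentially many pairs and hence preserves the polynomial-space bound.
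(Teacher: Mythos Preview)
Your proposal is correct and matches the paper's approach: inherit the lower bound from Theorem~\ref{thm:tpcs} and extend Algorithm~\ref{algo_pathchecking} by a clause that, on input~${\sim}\psi$, returns the classical negation of the recursive call on~$\psi$. You are in fact more careful than the paper in one place: you notice that the strict-split optimisation in line~5 relied on downward closure, which~$\sim$ breaks, and you correctly replace it by enumerating all covers~$\teamup'\cup\teamup''=\teamup$; this still fits in polynomial space (there are at most $3^{|\teamup|}$ such pairs) and is a genuine point the paper's terse proof does not address, while your observation that the periodicity bounds for the temporal operators are purely structural and independent of downward closure is also correct.
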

\begin{proof}
The lower bound follows from Theorem~\ref{thm:tpcs}, while the upper bound is obtained by adding the following line to the recursive algorithm from the proof of Lemma~\ref{lem:TPCs_in_PSPACE} (where $\neg$ denotes classical negation): 
  	$ \text{	\textbf{if } $\phi = \sim\!\phi'$ \textbf{then return} $\neg$\texttt{chk}$(\teamup,\phi')$ }$
\end{proof}
\else
\begin{theorem}\label{thm:tpcs-fo-npcomplete}
	Let $\mathcal{D}$ be a finite set of first-order definable generalised atoms.
	Then $\TPCs(\mathcal D)$ and $\TPCs(\sim)$ are $\PSPACE$-complete w.r.t.\ $\leqpm$-reductions.
\end{theorem}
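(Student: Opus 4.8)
The plan is to treat the two lower bounds and the two upper bounds uniformly. For the lower bounds, observe that every instance of $\TPCs$ is simultaneously an instance of $\TPCs(\mathcal{D})$ (use no generalised atom) and of $\TPCs(\sim)$ (use no occurrence of~$\sim$); hence $\PSPACE$-hardness is inherited verbatim from Theorem~\ref{thm:tpcs}, and it remains only to establish the matching upper bounds. For these I would reuse the recursive polynomial-space procedure implementing the synchronous team semantics that underlies Theorem~\ref{thm:tpcs}: on a finite team encoding~$\teamup$ and a formula~$\phi$ it computes the exact truth value of $\teamup\smodels\phi$ by recursing on subformulas, restricting the range of each temporal operator to the interval $[0,\prfx(\teamup)+\lcm(\teamup)]$ (shifting a fully consumed prefix by the least common multiple of the period lengths returns the very same encoding, which keeps the search finite and each stack frame polynomial).

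To the base procedure I would add one evaluation case per new construct. For a generalised atom~$D\in\mathcal{D}$ with defining FO-formula~$\varphi_D$, I build the finite relational structure~$\mathfrak{A}(\eval{\teamup})$ over the signature~$(A_p)_{p\in\ap}$ whose universe is the current team and in which a trace~$t$ lies in $A_p$ iff $p\in t(0)$, and return whether $\mathfrak{A}(\eval{\teamup})\models\varphi_D$. Since $\mathcal{D}$ is finite and fixed, each $\varphi_D$ is a fixed formula, so this is an instance of FO model checking in data complexity, solvable in logarithmic space and thus well within~$\PSPACE$. For the contradictory negation I add the case $\phi={\sim}\phi'$, returning the Boolean complement of the recursive call on~$(\teamup,\phi')$; as the procedure computes an exact truth value by exhaustive, space-reusing search, this single bit-flip is correct and free. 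The same extended procedure in fact decides $\TPCs(\mathcal{D})$, $\TPCs(\sim)$, and their combination.

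The one genuine subtlety, and the step I expect to be the main obstacle, is that both extensions destroy downward closure (Figure~\ref{fig:structuralpropertiesoverview}), on which the base procedure relies to consider only \emph{disjoint} splits of the team in the splitjunction case. Once $\sim$ (or a non-downward-closed atom such as an inclusion or independence atom) is allowed, disjoint splits are no longer complete: a singleton team~$\{t\}$ with $p\in t(0)$ satisfies $({\sim}\lnot p)\lor({\sim}\lnot p)$ only through the overlapping cover $T_1=T_2=\{t\}$, since every proper part fails ${\sim}\lnot p$. I would therefore replace the disjoint-split iteration by a search over \emph{all} covers, i.e.\ over all pairs~$(\teamup_1,\teamup_2)$ of sub-encodings with $\teamup_1\cup\teamup_2=\teamup$; this matches the synchronous semantics of the splitjunction exactly and is enumerated by a ternary counter of length~$\size{\teamup}$, hence still in polynomial space. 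Crucially, the periodicity bound used to truncate the temporal operators is a property of the encoding alone and is independent of the formula, so it survives the extension unchanged. Putting this together, the recursion depth stays bounded by the size of~$\phi$, every frame stores only a team encoding and polynomially many counter bits, and the two new cases add at most logarithmic-space work, so the whole procedure runs in~$\PSPACE$.
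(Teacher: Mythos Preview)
Your approach matches the paper's: inherit hardness from Theorem~\ref{thm:tpcs} and extend Algorithm~\ref{algo_pathchecking} by one case per new construct (FO model checking for the atoms in~$\mathcal{D}$, a Boolean flip for~$\sim$). You are in fact more careful than the paper on one point: the paper simply says to add a line to the existing algorithm, but line~5 of Algorithm~\ref{algo_pathchecking} iterates only over \emph{disjoint} splits, which was justified by downward closure---a property lost once $\sim$ or a non-downward-closed generalised atom enters. Your replacement of disjoint splits by all covers~$\teamup_1\cup\teamup_2=\teamup$ is the correct repair and still fits in polynomial space; the paper leaves this detail implicit.
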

\fi

The next proposition translates a result from Hannula~et~al.~\cite{Hannula:2018:CPL:3176362.3157054} to our setting. 
They show completeness for $\ATIME(\exp,\pol)$ for the satisfiability problem of propositional team logic with negation.
This logic coincides with LTL-formulas without temporal operators under team semantics. 
\begin{proposition}[\cite{Hannula:2018:CPL:3176362.3157054}]\label{prop:atime(exp,pol)}
	$\TSATa(\sim)$ and $\TSATs(\sim)$ for formulas without temporal operators are complete for $\ATIME(\exp,\pol)$ w.r.t.\ $\leqpm$-reductions.
\end{proposition}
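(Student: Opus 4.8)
The plan is to show that, on formulas without temporal operators, \LTL under either team semantics is nothing but propositional team logic with the Boolean negation~$\sim$, so that the two satisfiability problems coincide and the completeness result of Hannula et~al.\ transfers verbatim. First I would observe that the synchronous and asynchronous semantics agree on this fragment: the only clauses that can occur are those for $p$, $\lnot p$, $\land$, $\lor$, and $\sim$, all of which are defined uniformly via $\starmodels$ (the semantics diverge only on the temporal operators, which are absent here, and $\sim$ inherits this agreement by induction). Hence $\TSATa(\sim)$ and $\TSATs(\sim)$ are literally the same problem on temporal-operator-free inputs, and it suffices to analyse one of them.

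The key step is an invariance lemma: for every temporal-operator-free formula~$\varphi$, whether $T \starmodels \varphi$ depends only on the set $A(T) = \set{t(0) \mid t \in T} \subseteq \pow{\ap}$ of first letters occurring in~$T$. I would prove this by induction on~$\varphi$. The atomic, conjunctive, and negation cases are immediate, since their truth conditions only ever inspect the first letters of the traces. The splitjunction case is the one requiring real work: given $T$ and $T'$ with $A(T) = A(T')$ and a witnessing split $T = T_1 \cup T_2$, one puts $A_i = A(T_i)$ and re-splits $T'$ along first letters by $T_i' = \set{t' \in T' \mid t'(0) \in A_i}$; then $T_1' \cup T_2' = T'$ and $A(T_i') = A_i$, so the induction hypothesis transports satisfaction from $T_i$ to $T_i'$. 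This re-splitting argument is the main (though routine) obstacle.

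Given the lemma, I would identify a set $A \subseteq \pow{\ap}$ with a propositional team over~$\ap$ (each $a \in A$ being the assignment making exactly the propositions in~$a$ true) and verify, clause by clause, that $\starmodels$ restricted to these objects is precisely the team semantics of propositional logic extended by~$\sim$: the atoms~$p$ and $\lnot p$ behave as propositional literals, $\lor$ is the splitjunction, and $\sim$ is the Boolean negation. Consequently a nonempty team~$T$ with $T \starmodels \varphi$ exists if and only if a nonempty propositional team satisfying~$\varphi$ exists, the correspondence $A \mapsto \set{a \emptyset^\omega \mid a \in A}$ and $T \mapsto A(T)$ being computable in polynomial time in both directions. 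This makes $\TSAT^\star(\sim)$ on this fragment $\leqpm$-equivalent to satisfiability of propositional team logic with negation; the nonemptiness side-condition is matched by a routine adjustment that stays within the class. Invoking the $\ATIME(\exp,\pol)$-completeness established by Hannula et~al.~\cite{Hannula:2018:CPL:3176362.3157054} then yields both the upper and the lower bound, completing the proof.
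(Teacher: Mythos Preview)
Your proposal is correct and follows the same approach as the paper: the paper simply states that propositional team logic with~$\sim$ \emph{coincides} with temporal-operator-free \LTL under team semantics and then cites Hannula et~al.\ for the completeness, without spelling out the coincidence. Your invariance lemma and the first-letter correspondence make this coincidence explicit, which is a welcome elaboration but not a different route.
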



\begin{theorem}\label{thm:tmca-tmcs-negation}
$\TMCa(\sim)$ and $\TMCs(\sim)$ are hard for $\ATIME(\exp,\pol)$  w.r.t.\ $\leqpm$-reductions. 
\end{theorem}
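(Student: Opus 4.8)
The plan is to reduce the temporal-operator-free variant of $\TSATa(\sim)$ (resp.\ $\TSATs(\sim)$), which is $\ATIME(\exp,\pol)$-hard by Proposition~\ref{prop:atime(exp,pol)}, to $\TMCa(\sim)$ (resp.\ $\TMCs(\sim)$). First I would observe that on a temporal-operator-free formula $\psi$ the truth value $T \starmodels \psi$ depends only on the set $\{t(0)\mid t\in T\}$ of first letters, and that evaluating $\psi$ this way is exactly propositional team logic with $\sim$ over the variables $p_1,\dots,p_n$ occurring in $\psi$. Hence the source problem is equivalent to: is there a \emph{non-empty} propositional team $X\subseteq\pow{\{p_1,\dots,p_n\}}$ satisfying $\psi$? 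The obvious attempt would be to let $T(\kripke)$ realise all $2^n$ assignments as first letters and use splits to existentially pick a subteam; but a single position can carry only one world label, so that needs an exponentially large $\kripke$ and fails as a $\leqpm$-reduction.

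The key idea is therefore to \emph{spread each assignment over time} rather than placing it at one position. I would encode an assignment $s$ as the trace $t_s$ over a single fresh marker $a$ with $a\in t_s(i)$ iff $s(p_i)=1$, for $i\in\{1,\dots,n\}$ (position $0$ and the tail being fixed to $\emptyset$). All $2^n$ such traces are produced by a \emph{polynomial} layered Kripke structure $\kripke$: an initial world branches into a gadget with two worlds per position $i\le n$ (labelled $\{a\}$ and $\emptyset$), each connected to both worlds of the next layer, followed by a self-looping tail. Then $T(\kripke)=\{t_s\mid s\in\pow{\{p_1,\dots,p_n\}}\}$. Next I would define a translation $\mathrm{tr}$ on propositional team formulas by setting $\mathrm{tr}(p_i)\dfn\X^i a$ and $\mathrm{tr}(\neg p_i)\dfn\X^i\neg a$ (where $\X^i$ abbreviates $i$ nested next-operators) and letting $\mathrm{tr}$ commute with $\land$, $\lor$ and $\sim$, and prove by induction on $\theta$ that for every assignment team $X$ and its trace encoding $T_X\dfn\{t_s\mid s\in X\}$ one has $T_X\starmodels\mathrm{tr}(\theta)$ iff $X$ satisfies $\theta$ in propositional team semantics. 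The atomic case holds because $T_X\starmodels\X^i a$ iff $a\in t_s(i)$ for all $s\in X$, i.e.\ $s(p_i)=1$ for all $s\in X$. The split case is the crux and works because $s\mapsto t_s$ is injective, so every split $T_X=T_1\cup T_2$ is of the form $T_1=T_{X_1}$, $T_2=T_{X_2}$ with $X_1\cup X_2=X$, and vice versa; the $\sim$-case is immediate from the semantics $T\starmodels\sim\chi$ iff $T\nstarmodels\chi$. Since $\mathrm{tr}(\theta)$ uses only the operators $p,\neg p,\land,\lor,\X,\sim$, on which the synchronous and asynchronous semantics coincide, the same induction settles both $\starmodels=\amodels$ and $\starmodels=\smodels$ at once.

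Finally I would bridge the gap between the \emph{full} team $T(\kripke)$ and the existential \emph{non-empty subteam} quantification of satisfiability by a gadget. Using that $\X a\lor\X\neg a$ is satisfied by every subteam of $T(\kripke)$ (split according to whether $a$ holds at position $1$) and that $\sim(\X a\land\X\neg a)$ holds exactly on non-empty teams, I would set
\[
\phi\;\dfn\;\bigl(\mathrm{tr}(\psi)\land\sim(\X a\land\X\neg a)\bigr)\lor\bigl(\X a\lor\X\neg a\bigr).
\]
Then $T(\kripke)\starmodels\phi$ iff there is a split with a non-empty part $T_X$ satisfying $\mathrm{tr}(\psi)$ (the other part being all of $T(\kripke)$), i.e.\ iff some non-empty $X$ satisfies $\psi$, which is exactly satisfiability of $\psi$. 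Both $\kripke$ (size $O(n)$) and $\phi$ (length $O(n\cdot\size{\psi})$) are computable in polynomial time, so this is a $\leqpm$-reduction and the theorem follows. The main obstacle to anticipate is precisely the exponential-blowup trap described above: the reduction only becomes polynomial once assignments are encoded temporally, and the real work is checking that the splitjunction and the contradictory negation survive this temporal encoding compositionally (the induction of the second paragraph), together with verifying that the non-emptiness gadget correctly emulates the existential quantifier of the satisfiability problem.
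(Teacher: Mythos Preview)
Your proof is correct and follows essentially the same approach as the paper: both reduce from satisfiability of propositional team logic with $\sim$, build a polynomial-size layered Kripke structure whose traces biject with propositional assignments, translate literals to temporal accesses of the relevant layer, and wrap the result in a split/non-emptiness gadget of the form~$(\text{trivially true})\lor((\sim\bot)\land\varphi^*)$. The only notable difference is cosmetic: the paper labels layer~$i$ with fresh propositions $p_i,\overline{p_i}$ and uses $\F p_i$/$\F\overline{p_i}$ for access, whereas you use a single proposition and $\X^i a$/$\X^i\neg a$; your choice has the pleasant side effect that the translated formula lives in the $\X$-only fragment where $\amodels$ and $\smodels$ coincide by definition, so the two hardness results share a single argument without further comment.
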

\begin{proof}
We will state a reduction from the satisfiability problem of propositional team logic with negation $\sim$ (short $\PL(\sim)$). The stated hardness then follows from Proposition~\ref{prop:atime(exp,pol)}.

\begin{figure}
\centering
	\begin{tikzpicture}[x=.5cm,y=.75cm]
		\node at (-1.5,0) {$\mathcal K_P$:};
		\node[dw] (root) at (0,0) {\tiny$r$};
		\node[d,label={180:$\substack{p_1}$}] (a1) at (-1,-1) {\tiny$a_1$};
		\node[d,label={0:$\substack{\overline{p_1}}$}] (b1) at (1,-1) {\tiny$b_1$};
		\node[d,label={180:$\substack{p_2}$}] (a2) at (-1,-2) {\tiny$a_2$};
		\node[d,label={0:$\substack{\overline{p_2}}$}] (b2) at (1,-2) {\tiny$b_2$};
		\node[d,label={180:$\substack{p_n}$}] (an) at (-1,-3) {\tiny$a_n$};
		\node[d,label={0:$\substack{\overline{p_n}}$}] (bn) at (1,-3) {\tiny$b_n$};
		
		\foreach \f/\t in {root/a1,root/b1,a1/b2,a1/a2,b1/b2,b1/a2}{
			\path[-stealth',draw] (\f) edge (\t);
		}
		
		\path[-stealth',draw,dotted] (a2) edge (bn);
		\path[-stealth',draw,dotted] (a2) edge (an);
		\path[-stealth',draw,dotted] (b2) edge (an);
		\path[-stealth',draw,dotted] (b2) edge (bn);
		
		\path[-, draw] (an) edge[loop below,->, >=stealth'] (an);
		\path[-, draw] (bn) edge[loop below,->, >=stealth'] (bn);

	\end{tikzpicture}
\caption{Kripke structure for the proof of Theorem~\ref{thm:tmca-tmcs-negation}.}\label{fig:kripke-tmca-tmcs-negation}
\end{figure}
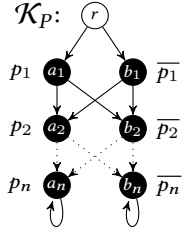
For $P=\{p_1,\dots, p_n\}$, consider the tra\-ces starting from the root $r$ of the Kripke structure $\mathcal{K}_P$ depicted in Figure~\ref{fig:kripke-tmca-tmcs-negation} using proposition symbols $p_1,\dots, p_n,\overline{p_1},\dots, \overline{p_n}$.
Each trace in the model corresponds to a propositional assignment on $P$.
For $\varphi \in \PL{(\sim)}$, let $\varphi^*$ denote the $\LTL(\sim)$-formula obtained by simultaneously replacing each (non-ne\-gated) variable $p_i$ by $\F p_i$ and each negated variable $\lnot p_i$ by $\F\overline{p_i}$. 
Let $P$ denote the set of variables that occur in $\varphi$. 
Define $\top:= (p \lor \neg p)$ and $\bot:=p \land \neg p$, then $T(\mathcal{K}_P)\starmodels \bigl(\top \lor ((\sim\!\!\bot) \land \varphi^*)\bigr)$ \emph{if and only if} $T' \starmodels \varphi^*$ for some non-empty $T'\subseteq T(\mathcal{K}_P)$. It is easy to check that $T'\starmodels\varphi^*$ \emph{if and only if} the propositional team  corresponding to $T'$ satisfies $\varphi$ and thus the above holds if and only if $\varphi$ is satisfiable.
%
%
%
%
\end{proof}


In the following, we define the semantics for dependence atoms.
For Teams~$T \subseteq (\pow{\ap})^\omega$ we define $T\starmodels\!\dep(p_1,\dots,p_n;q_1,\dots,q_m)$ if 
\[
	\forall t,t'\in T:\; (\agreeson{p_1}{t(0)}{t'(0)},\dots,\agreeson{p_n}{t(0)}{t'(0)}) \text{ implies } (\agreeson{q_1}{t(0)}{t'(0)},\dots,\agreeson{q_m}{t(0)}{t'(0)}),
\]
where $\agreeson{p}{t(i)}{t(j)}$ means the sets $t(i)$ and $t(j)$ agree on proposition $p$, i.e., both contain $p$ or not.
Observe that the formula $\dep(;p)$ merely means that $p$ has to be constant on the team.
Often, due to convenience we will write $\dep(p)$ instead of $\dep(;p)$. 
Note that the hyperproperties `input determinism' now can be very easily expressed via the formula\label{pg:input-nonint}
$
\dep(i_1,\dots,i_n;o_1,\dots,o_m),
$
where $i_j$ are the (public) input variables and $o_j$ are the (public) output variables.

Problems of the form $\TSATa(\dep)$, etc.,  refer to \LTL-formulas with dependence operator $\dep$. The following proposition follows from the corresponding result for classical \LTL using downwards closure and the fact that on singleton teams dependence atoms are  trivially fulfilled.
\begin{proposition}\label{thm:TSATa-TSATs-dep}
$\TSATa(\dep)$ and $\TSATs(\dep)$ are $\PSPACE$-complete.
\end{proposition}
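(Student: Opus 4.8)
The plan is to reduce both satisfiability problems to classical $\LTL$ satisfiability, which lies in $\PSPACE$. The matching lower bound requires no work: since plain $\LTL$ is syntactically a fragment of $\LTL(\dep)$, the $\PSPACE$-hardness of $\TSATa$ and $\TSATs$ from Proposition~\ref{prop:TMCa-TSATa-PSPACE} transfers verbatim. So the whole proof reduces to establishing membership in $\PSPACE$ for $\TSATa(\dep)$ and $\TSATs(\dep)$.

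For the upper bound I would isolate the two ingredients named in the statement. First, $\LTL(\dep)$ retains downwards closure. The structural properties of Figure~\ref{fig:structuralpropertiesoverview} are proved for the plain $\LTL$ operators by induction, and the only new inductive case is the dependence atom itself: if $T\starmodels\dep(p_1,\dots,p_n;q_1,\dots,q_m)$ and $T'\subseteq T$, then the defining universally quantified implication over pairs drawn from $T$ restricts to pairs from $T'$, so $T'\starmodels\dep(p_1,\dots,p_n;q_1,\dots,q_m)$; hence downwards closure propagates through the induction exactly as before. Second, on a singleton team $\set{t}$ every dependence atom is trivially satisfied, since the only available pair is $(t,t)$, which agrees on every proposition. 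Writing $\phi^\circ$ for the dep-free $\LTL$-formula obtained from $\phi$ by replacing each dependence subformula with the tautology $p\lor\lnot p$, a straightforward induction extending singleton equivalence from Figure~\ref{fig:structuralpropertiesoverview} by this single trivial case yields $\set{t}\starmodels\phi$ if and only if $t\ltlmodels\phi^\circ$, for both semantics.

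Combining the two gives the decisive equivalence: $\phi$ is satisfiable by some non-empty team under either semantics if and only if $\phi^\circ$ is satisfiable under classical $\LTL$. For the forward direction, downwards closure lets me shrink any non-empty witnessing team to a singleton $\set{t}$ with $\set{t}\starmodels\phi$, whence $t\ltlmodels\phi^\circ$; for the converse, any classical model $t$ of $\phi^\circ$ gives a non-empty team $\set{t}\starmodels\phi$, since the dep-atoms it must satisfy hold trivially on a singleton. As $\phi\mapsto\phi^\circ$ is computable in linear time and classical $\LTL$ satisfiability is in $\PSPACE$, membership in $\PSPACE$ follows for both problems, completing the proof together with the lower bound. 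The main obstacle is purely the bookkeeping of the two inductions through the splitjunction and the temporal operators in both semantics; the dependence atom contributes only base cases, so the only point needing care is that the empty-team property and downwards closure make the splitjunction case behave correctly when the underlying team is a singleton.
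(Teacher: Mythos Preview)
Your proof is correct and follows exactly the approach the paper sketches: downwards closure lets you pass to singletons, dependence atoms trivialise on singletons, and singleton equivalence then reduces the question to classical $\LTL$ satisfiability. Your explicit translation~$\phi\mapsto\phi^\circ$ just spells out what the paper leaves implicit in its one-line justification.
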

%

In the following, we will show a lower bound while the matching upper bound still is open.
\begin{theorem}\label{thm:TPCa-dep}
$\TPCa(\dep)$ is $\PSPACE$-hard w.r.t.\ $\leqpm$-reductions.
\end{theorem}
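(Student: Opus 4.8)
The plan is to prove hardness by a polynomial-time many-one reduction from the validity problem $\qbfval$ for quantified Boolean formulas, which is $\PSPACE$-complete even when the matrix is in 3CNF and which already supplied the hardness for the synchronous variant in Lemma~\ref{lem:TPCs_PSPACEhard}. Given $\varphi=\exists x_1\forall x_2\cdots Q x_n\,\chi$ with $\chi=\bigwedge_{j=1}^m\bigvee_{k=1}^3\ell_{jk}$, I would reuse the overall template of that proof: build a finite team encoding $\teamup$ of ultimately periodic traces (with the variable/quantifier/clause gadgets of Figure~\ref{fig:tracegadgets}) together with an $\LTL(\dep)$-formula $f(\varphi)$ whose Boolean structure mirrors the quantifier prefix and the matrix, and then prove $\varphi\in\qbfval$ if and only if $\teamup\amodels f(\varphi)$ by induction on the number of quantifier alternations. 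The matrix-checking disjunction $\bigvee_{i}\F x_i\lor\bigvee_{j}\F c_j$ can be carried over essentially verbatim.

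The conceptual difference to the synchronous construction is that the asynchronous operators \emph{decouple} the traces: $T\amodels\F\psi$ and $T\amodels\G\psi$ quantify existentially, respectively universally, over whole tuples $(k_t)_{t\in T}$ of per-trace shifts, instead of over one common shift. The dependence atom is the tool to recover the lockstep behaviour that the synchronous semantics provided for free. Concretely, I would equip every trace with a shared binary clock on $b=\lceil\log L\rceil$ fresh propositions $c_0,\dots,c_{b-1}$, where $L$ is a common period of all constructed traces (as in Lemma~\ref{lem:TPCs_PSPACEhard}, the periods are $3$ or $6$, so $L$ and $b$ are constant). Since the clock value at position~$i$ is $i\bmod L$ on every trace, the atom $\dep(c_0,\dots,c_{b-1})$ holds on a shifted team $\{t[k_t,\infty)\mid t\in T\}$ exactly when all the $k_t$ agree modulo~$L$, that is, exactly on the \emph{synchronised} shift tuples. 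This makes the existential direction unproblematic: guarding an $\F$ by the clock dependence, as in $\F(\dep(c_0,\dots,c_{b-1})\land\psi)$, forces the asynchronous existential shift to be synchronised and hence to coincide with a single common shift, faithfully simulating the synchronous $\F$; existential quantifiers $\exists x_i$ are then handled by a splitjunction selecting the value, exactly as in Lemma~\ref{lem:TPCs_PSPACEhard}, which is sound because splitjunctions agree in both semantics.

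The hard part, and the main obstacle I expect, is the universal quantifier. Under asynchronous semantics the universal shift quantification hidden in $\G$ — and likewise in the ``$\forall k_t'<k_t$'' component of the $\U\#$-gadget used for $\forall$ in Lemma~\ref{lem:TPCs_PSPACEhard} — ranges over \emph{all} shift tuples, including the non-synchronised ones. Crucially, $\TPCa(\dep)$ does not contain the contradictory negation~$\sim$, so I cannot write ``non-synchronised shift $\Rightarrow$ vacuously true'' to discard the spurious tuples via a dependence atom. The resolution must therefore be structural: either the trace gadgets are engineered so that every non-synchronised shift tuple makes the relevant subformula hold trivially (for instance because a guard proposition becomes universally present across the shifted team, or a dependence atom degenerates) while synchronised tuples faithfully evaluate the matrix and realise both truth values of the quantified variable; or, preferably, the universal quantification is pushed entirely into the \emph{team structure} — the two gadget traces carrying $x_i$ and its complement — and into the inherently universal semantics of the dependence atom (its ``for all $t,t'\in T$''), leaving the temporal shifts to be used only existentially, where they are well-behaved. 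I expect the bulk of the work to lie precisely in designing these gadgets and in the case analysis of the $\forall$ step of the induction, verifying with the clock dependence and the periodic structure that exactly the intended shift tuples survive.

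Granting such gadgets, correctness follows by the same two-directional induction as in Lemma~\ref{lem:TPCs_PSPACEhard}, with the clock-dependence atoms interleaved at each temporal step so that the asynchronous operators behave synchronously on the teams that actually arise. Finally, the reduction is computable in polynomial (indeed linear) time, since the clock adds only $O(\log L)=O(1)$ propositions and the gadgets are of linear size in $\size{\varphi}$, which gives the $\leqpm$ bound and completes the argument.
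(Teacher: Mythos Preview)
Your proposal is not a proof but a plan, and the plan leaves the decisive step open. You yourself flag the universal quantifier as ``the hard part'' and then offer two vague alternatives without committing to either or showing that either can be made to work. The phrase ``granting such gadgets, correctness follows'' is precisely where the content of the argument should be.

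More concretely, your clock-synchronisation idea cannot handle $\G$ in the way you suggest. Under asynchronous semantics, $T\amodels\G\theta$ requires $\theta$ to hold of \emph{every} shifted team $\{t[k_t,\infty)\mid t\in T\}$. If $\theta$ contains the conjunct $\dep(c_0,\dots,c_{b-1})$, then any non-synchronised tuple falsifies $\theta$ and hence falsifies the whole $\G\theta$; the atom does not make such tuples ``vacuously true'', it makes them fatal. To discard them you would need $\sim\!\dep(\cdot)$ on the other side of a splitjunction, which is exactly the contradictory negation you do not have. So the clock device, while fine for $\F$, is a dead end for the universal case, and your first ``structural'' escape route (engineering traces so that every non-synchronised tuple trivially satisfies the body) is the only one left---but you give no construction.

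The paper's proof abandons the synchronous template and the clock entirely and uses a much lighter construction that exploits asynchronicity rather than fighting it. For each variable $x_i$ there are just two traces: a constant one carrying $p_i,q_i,r_i,s_i$, and a period-$2$ one alternating $\{q_i,r_i,\overline{p_i}\}$ and $\{q_i,s_i,\overline{p_i}\}$. The atom $\dep(p_i)$ is used only in its plainest form, ``$p_i$ is constant on the team'', which forces the two $x_i$-traces into opposite sides of a split. Existential quantification is $(q_i\land\dep(p_i))\lor f(\psi)$. Universal quantification is $\G\bigl((\dep(p_i)\land q_i\land r_i)\lor(s_i\land f(\psi))\bigr)$: since only the period-$2$ trace changes under shifts, the asynchronous $\G$ effectively produces two relevant shifted teams, and the markers $r_i,s_i$ force the split to send the constant trace to the $f(\psi)$-side in one case (setting $x_i$ true) and the period-$2$ trace there in the other (setting $x_i$ false). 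No synchronisation is simulated; the per-trace shift freedom of $\G$ is what \emph{generates} the two truth values. This is closer in spirit to your second alternative than to your first, but it still uses a temporal $\G$ universally---the point is that the gadgets are so simple that every shift tuple collapses to one of two cases.
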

\begin{proof}
As in the proof of Lemma~\ref{lem:TPCs_PSPACEhard}, we reduce from $\qbfval$.

Consider a given quantified Boolean formula $\exists x_1\forall x_2\cdots Q x_n\chi$, where $\chi=\bigwedge_{j=1}^m\bigvee_{k=1}^3\ell_{jk}$, $Q\in\{\exists,\forall\}$, and
 $x_1,\dots,x_n$ are exactly the free variables of $\chi$ and pairwise distinct.
We will use two traces for each variable $x_i$ (gadget for $x_i$) as shown in Figure~\ref{fig:traces-proof-tpca-dep}.
\begin{figure}
\centering
	\begin{tikzpicture}[x=.8cm]
		\node[dotwhite,label={180:$\substack{p_i\\q_i\\r_i\\s_i}$}] (a) at (0,0) {};
		
		\path[-, draw] (a) edge[loop above,->, >=stealth'] (a);
		
		\node[dotwhite, label={180:$\substack{q_i\\r_i\\\overline{p_i}}$}] (b1) at (1.5,0) {};
		\node[dot, label={0:$\substack{q_i\\s_i\\\overline{p_i}}$}] (b2) at (2.2,0) {};
		
		\path[-stealth', draw] (b1) edge [bend left] (b2);
		\path[-stealth', draw] (b2) edge [bend left] (b1);
		\node at (1,-.75) {\footnotesize everywhere $p_j,\overline{p_j}$ for $i\neq j$};
	\end{tikzpicture}
	\caption{Traces in the proof of Theorem~\ref{thm:TPCa-dep}.}\label{fig:traces-proof-tpca-dep}
\end{figure}
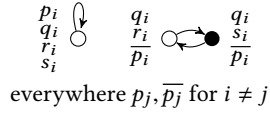

Intuitively, the proposition $p_i$ marks that the variable $x_i$ is set true while the proposition $\overline{p}_i$ marks that $x_i$ is set false, $q_i$ encodes that the gadget is used to quantify $x_i$, and $s_i,r_i$ are auxiliary propositions.
Picking the left trace corresponds to setting $x_i$ to true and picking the right trace corresponds to setting $x_i$ to false. 
In the following, we omit the $p_j$ and $\overline{p_j}$, when $j\neq i$, for readability.
Then, the team $T$ is defined as
$$
T:=\{(\varepsilon,\{p_i,q_i,r_i,s_i\}), (\varepsilon,\{q_i,r_i,\overline{p}_i\}\{q_i,s_i,\overline{p}_i\})\mid 1\leq i\leq n\}.
$$
Next, we recursively define the $\LTL(\dep)$-formula used in the reduction: $f(\chi)$ is obtained from $\chi$ by substituting every positive literal $x_i$ by $p_i$ and negated literal $\lnot x_i$ by  $\overline{p_i}$, $f(\exists x_i\psi) := \big(q_i\land\dep(p_i)\big) \lor f(\psi)\big)$, and
\begin{align*}
	f(\forall x_i\psi) &:= \G\Big( \big(\dep(p_i) \land q_i \land r_i\big) \lor \big(s_i \land f(\psi)\big) \Big).
\end{align*}
In the existential quantification of $x_i$, the splitjunction requires for the $x_i$-trace-pair to put $(\varepsilon,\{p_i,q_i,r_i,s_i\})$ into the left or right subteam (of the split). 
The trace $(\varepsilon,\{q_i,r_i,\overline{p}_i\}\{q_i,s_i,\overline{p}_i\})$ has to go to the opposite subteam as $\dep(p_i)$ requires $p_i$ to be of constant value.
(Technically both of the traces could be put to the right subteam, but this logic is downwards closed and, accordingly, this allows to omit this case.)
As explained before, we existentially quantify $x_i$ by this split.
For universal quantification, the idea is a bit more involved.
%
To verify $T \starmodels \G \theta$, where $\G\theta = f(\forall x_i\psi)$ essentially two different teams $T'$ for which $T'\models \theta$ need to be verified. \\
(1.) $(\varepsilon,\{p_i,q_i,r_i,s_i\}),(\varepsilon,\{q_i,r_i,\overline{p}_i\}\{q_i,s_i,\overline{p}_i\})\in T'$. In this case, $(\varepsilon,\{p_i,q_i,r_i,s_i\})$ must be put to the right subteam of the split and $(\varepsilon,\{q_i,r_i,\overline{p}_i\}\{q_i,s_i,\overline{p}_i\})$ to the left subteam, setting $x_i$ true. \\
(2.) $(\varepsilon,\{p_i,q_i,r_i,s_i\}),(\varepsilon,\{q_i,s_i,\overline{p}_i\}\{q_i,r_i,\overline{p}_i\})\in T'$. In this case, $(\varepsilon,\{p_i,q_i,r_i,s_i\})$ must be put to the left and $(\varepsilon,\{q_i,s_i,\overline{p}_i\}\{q_i,r_i,\overline{p}_i\})$ to the right subteam, implicitly forcing $x_i$ to be false.
These observations are utilised to prove that
$\langle \exists x_1\forall x_2\cdots Q x_n\chi\rangle\in\qbfval$ if and only if $\langle f(\exists x_1\forall x_2\cdots Q x_n\chi), T\rangle\in\TPCa(\dep).$ 
The reduction is polynomial time computable in the input size.
\end{proof}

The following result from Virtema talks about the validity problem of propositional team logic.
\begin{proposition}[\cite{DBLP:journals/iandc/Virtema17}]\label{prop:val-nexptime}
	Validity of propositional logic with dependence atoms is $\NEXPTIME$-complete w.r.t.\ $\leqpm$-reductions.
\end{proposition}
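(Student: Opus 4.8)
I would prove matching \NEXPTIME{} upper and lower bounds, writing $\PL(\dep)$ for propositional logic with dependence atoms under team semantics (equivalently, the temporal-operator-free fragment evaluated at position~$0$, with teams of assignments~$t(0)$). The whole argument rests on the fact that $\PL(\dep)$ is \emph{downwards closed}: $T \starmodels \varphi$ and $T' \subseteq T$ imply $T' \starmodels \varphi$ (an easy induction, using disjoint splits for the splitjunction, dependence atoms being inherited by subteams, and literals holding on subsets).

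\textbf{Upper bound.} The key observation is that, by downwards closure, $\varphi$ is valid if and only if the single \emph{full team}~$T_{\mathrm{full}}$ consisting of \emph{all} $2^n$ assignments over the $n$ variables of $\varphi$ satisfies $\varphi$. Indeed, satisfaction of a $\PL(\dep)$-formula depends only on the restriction of a team to $\mathrm{Var}(\varphi)$, every such team is a subteam of $T_{\mathrm{full}}$, so $T_{\mathrm{full}} \starmodels \varphi$ propagates to all teams by downwards closure, while validity trivially entails $T_{\mathrm{full}} \starmodels \varphi$. It therefore suffices to model-check the fixed exponential-size team $T_{\mathrm{full}}$. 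Now model checking $\PL(\dep)$ on an explicitly given team is in \NP: one guesses, for each splitjunction, a (disjoint) distribution of the current subteam among its two disjuncts, and verifies the literals and dependence atoms at the leaves in polynomial time; the certificate is polynomial in the team size and the formula. Applied to $T_{\mathrm{full}}$, whose size is $2^n$, this is a nondeterministic $2^{\mathrm{poly}(n)}$-time procedure, so validity lies in \NEXPTIME.

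\textbf{Lower bound.} For hardness I would reduce from the truth problem of \emph{dependency quantified Boolean formulas} (DQBF), which is \NEXPTIME-complete and whose Henkin-style quantifier dependencies match dependence atoms almost verbatim. Given $\forall u_1 \cdots \forall u_m\, \exists x_1(S_1) \cdots \exists x_k(S_k)\, \psi$ with $S_i \subseteq \set{u_1,\dots,u_m}$ and $\psi$ in CNF, I would use the variables $u_1,\dots,u_m$ together with marker variables for the existential choices, and design a formula $\Phi$ so that: (i) a top-level splitjunction existentially selects a subteam~$B$ of $T_{\mathrm{full}}$ representing a candidate choice of Skolem functions; (ii) the atoms $\dep(S_i; x_i)$ force each existential value to depend only on $S_i$, capturing the branching-quantifier constraint exactly; and (iii) since a CNF is \emph{flat} in team semantics (a team satisfies a clause iff each assignment in it does), a flat encoding of $\psi$ asserts that every selected assignment satisfies the matrix. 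The DQBF is true precisely when such a $B$ exists that moreover \emph{realises every universal valuation}.

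\textbf{Main obstacle.} This coverage requirement is the crux: ``$B$ contains an assignment for every valuation of $u_1,\dots,u_m$'' is \emph{not} downwards closed, hence not directly expressible in $\PL(\dep)$, so a naive reduction fails (a tiny $B$ would vacuously satisfy the flat matrix). The plan is to exploit the union constraint of the top splitjunction on the full team: writing $T_{\mathrm{full}}$ as a disjoint union $A \cup B$, coverage of $B$ is equivalent to the \emph{downwards-closed} property that the complementary part $A$ contains no \emph{complete fibre} (all marker valuations paired with a single universal valuation~$u^*$); I would enforce exactly this on $A$ through auxiliary markers and a nested splitjunction that can only succeed when, in every fibre, some marker value is missing from $A$. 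Once coverage is secured, the equivalence ``DQBF true $\iff T_{\mathrm{full}} \starmodels \Phi \iff \Phi$ valid'' follows by reading Skolem functions off $B$ in one direction and building $B$ from given Skolem functions in the other, with $\Phi$ plainly computable in polynomial time, yielding $\mathrm{DQBF} \leqpm$ validity. Checking that the coverage gadget is simultaneously downwards closed and faithful is the step I expect to demand by far the most care.
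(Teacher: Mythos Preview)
This proposition carries no proof in the paper: it is cited from \cite{DBLP:journals/iandc/Virtema17} and used as a black box for Theorem~\ref{thm:TMCa-TMCs-dep}. There is therefore nothing in the present paper to compare your attempt against.

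On the merits, your upper bound is exactly the standard argument and is correct. Your lower-bound route via DQBF is also viable and close to how the cited reference proceeds; the obstacle you flag (coverage of the universal part is not downwards closed) is genuine, but it has a clean resolution that makes the unspecified ``auxiliary markers and nested splitjunction'' unnecessary. Over the variables $u_1,\dots,u_m,x_1,\dots,x_k$, the constraint on the complementary team~$A$ can simply be
\[
\varphi_A \;=\; \bigvee_{i=1}^{k}\dep(u_1,\dots,u_m;x_i),
\]
which says precisely that $A$ contains no complete $x$-fibre. Indeed, if the fibre of $A$ at some $u^*$ is all of $\{0,1\}^k$ and $A=A_1\cup\cdots\cup A_k$ with $x_i$ constant (say equal to $a_i$) over $u^*$ in $A_i$, then the assignment with $x_i=1-a_i$ for every $i$ lies in no $A_i$; conversely, if each fibre misses some tuple $(b_1,\dots,b_k)$, put every remaining point of that fibre into an $A_i$ for which $x_i=1-b_i$, making $x_i$ constant there. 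Since dependence atoms constrain each $u$-fibre independently, this split globalises. Taking $\Phi=\varphi_A\lor\bigl(\bigwedge_i\dep(S_i;x_i)\wedge\psi\bigr)$, your two directions go through verbatim, and the $x_i$ themselves already serve as the ``markers''. So your plan is sound; only the concrete gadget was left open.
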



\begin{theorem}\label{thm:TMCa-TMCs-dep}
$\TMCa(\dep)$ and $\TMCs(\dep)$ are $\NEXPTIME$-hard w.r.t.\ $\leqpm$-reductions.
\end{theorem}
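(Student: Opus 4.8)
The plan is to reduce from the validity problem for propositional logic with dependence atoms, $\PL(\dep)$, which is $\NEXPTIME$-complete by Proposition~\ref{prop:val-nexptime}; this parallels the reduction from $\PL(\sim)$-satisfiability used in the proof of Theorem~\ref{thm:tmca-tmcs-negation}. The key first observation is that $\PL(\dep)$ is downwards closed, so a formula $\varphi$ over the variables $P=\{p_1,\dots,p_n\}$ is valid if and only if it is satisfied by the largest team over $P$, namely the team $\pow{P}$ of all assignments. In contrast to the satisfiability reduction of Theorem~\ref{thm:tmca-tmcs-negation}, I therefore do not need the selecting disjunct $\top\lor((\sim\!\bot)\land\cdot)$: it suffices to construct in polynomial time a Kripke structure $\kripke$ and an $\LTL(\dep)$-formula $\varphi^*$ with $T(\kripke)\starmodels\varphi^*$ if and only if $\pow{P}\models\varphi$, and to read off hardness of both $\TMCa(\dep)$ and $\TMCs(\dep)$.

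For the backbone I would reuse the structure $\kripke_P$ of Figure~\ref{fig:kripke-tmca-tmcs-negation}, whose traces are in bijection with the assignments over $P$, the value of $p_i$ being recorded by the choice between the $p_i$- and the $\overline{p_i}$-labelled world. Exactly as in Theorem~\ref{thm:tmca-tmcs-negation}, the Boolean part of $\varphi$ is translated by replacing each literal $p_i$ with $\F p_i$ and each literal $\lnot p_i$ with $\F\overline{p_i}$ and by keeping $\land$ and the splitjunction $\lor$. The invariant to be proved by induction on the formula is that for every subteam $T'\subseteq T(\kripke)$ we have $T'\starmodels\varphi^*$ if and only if the propositional team corresponding to $T'$ satisfies $\varphi$; instantiating it with $T'=T(\kripke)$ gives the reduction. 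Since in $\kripke_P$ every (negated) proposition can occur only at a single fixed position along all traces, the synchronous and the asynchronous readings of the $\F$-operators coincide on the teams arising in the induction, so one construction settles both $\amodels$ and $\smodels$.

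The crux, and the step I expect to be the main obstacle, is the translation of a dependence atom $\dep(p_{i_1},\dots,p_{i_k};p_j)$. Under our semantics a dependence atom compares only the initial letters $t(0)$ of the traces, whereas in $\kripke_P$ the value of each variable is revealed at its own position; thus a single atom can read only one variable, and, information-theoretically, no polynomial Kripke structure can expose an $n$-bit assignment at one position, since at any fixed position only the polynomially many worlds reachable there are visible. A single atom therefore cannot capture $\dep(p_{i_1},\dots,p_{i_k};p_j)$. My plan is to enrich $\kripke$ with a dedicated reading gadget per dependence-atom occurrence, in which the values of $p_{i_1},\dots,p_{i_k},p_j$ are presented simultaneously at one position, and to translate the atom as $\X^{\ell}\dep(p_{i_1},\dots,p_{i_k};p_j)$ jumping to that gadget (splitting the consequent so that only single output variables remain). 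The delicate point is consistency: the gadget must display precisely the values fixed by the backbone, and enforcing this agreement structurally costs one state per remembered bit, so the gadget stays polynomial only when the arities and the number of occurrences per variable are bounded. Handling unbounded, multi-position dependence without an exponential blow-up in $\kripke$ is the genuine difficulty; I would try to overcome it either by first normalising $\varphi$ so that all dependence atoms have bounded arity, or by threading the relevant bits through the structure block by block so that each gadget only has to remember its own few variables.
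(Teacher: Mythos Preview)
Your overall plan is exactly the paper's: reduce from the validity problem for propositional dependence logic via the Kripke structure~$\kripke_P$ of Theorem~\ref{thm:tmca-tmcs-negation}, use downward closure to collapse validity to model checking on the maximal team, and translate literals by $\F p_i$ and $\F\overline{p_i}$. The paper's own proof is a two-sentence sketch that does not spell out the translation of dependence atoms either, so you have put your finger on precisely the point it glosses over.

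Your diagnosis of the obstacle is also correct: in $\kripke_P$ each proposition~$p_i$ is visible only at position~$i$, so no single $\LTL(\dep)$-atom can read a multi-variable dependency, and a reading gadget that exposes $k{+}1$ bits simultaneously would need $2^{k+1}$ states to remain consistent with the backbone. Where you go astray is in trying to solve this on the target side with structural gadgets and bit-threading. The clean fix is on the \emph{source} side: the $\NEXPTIME$-hardness of Proposition~\ref{prop:val-nexptime} already holds for propositional dependence logic restricted to \emph{constancy} atoms~$\dep(q)$, so you may assume the input formula uses only those. A constancy atom~$\dep(p_j)$ then translates simply to $\X^{j}\dep(p_j)$: at position~$j$ every trace of $\kripke_P$ carries exactly one of $p_j,\overline{p_j}$, and since the $\X$-operator is synchronous under both $\amodels$ and $\smodels$, the translated atom holds on a subteam of $T(\kripke_P)$ exactly when the corresponding propositional team is constant in~$p_j$. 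With this, your inductive invariant goes through unchanged and no gadgetry or arity bookkeeping is needed.
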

\begin{proof}
The proof of this result uses the same construction idea as in the proof of Theorem~\ref{thm:tmca-tmcs-negation}, but this time from a different problem, namely, validity of propositional logic with dependence atoms which settles the lower bound by Proposition~\ref{prop:val-nexptime}.
Due to downwards closure the validity of propositional formulas with dependence atoms boils down to model checking the maximal team in the propositional (and not in the trace) setting, which essentially is achieved by $T(\mathcal K)$, where $\mathcal K$ is the Kripke structure from the proof of Theorem~\ref{thm:tmca-tmcs-negation}.
\end{proof}

\section{LTL under Team Semantics vs.\ HyperLTL} 
 \LTL under team semantics expresses hyperproperties~\cite{DBLP:journals/jcs/ClarksonS10}, that is, sets of teams, or equivalently, sets of sets of traces. 
 Recently, \hyltl~\cite{DBLP:conf/post/ClarksonFKMRS14} was proposed to express information flow properties, which are naturally hyperproperties. For example, input determinism can be expressed as follows: every pair of traces that coincides on their input variables, also coincides on their output variables (this can be expressed in \LTL with team semantics by a dependence atom $\dep$ as sketched is Section \ref{sec:extensions}). To formalise such properties, \hyltl allows to quantify over traces. This results in a powerful formalism with vastly different properties than \LTL~\cite{DBLP:conf/stacs/Finkbeiner017}. After introducing syntax and semantics of \hyltl, we compare the expressive power of \LTL under team semantics and \hyltl.

The formulas of \hyltl are given by the grammar 
\[
\phi {} \cceq {}  \exists \pi.\phi \mid \forall \pi.\phi \mid \psi, \quad\quad \psi {}  \cceq {}  p_\pi \mid \neg \psi \mid \psi \lor \psi \mid \X \psi \mid \psi \U \psi,
\]
where $p$ ranges over atomic propositions in $\ap$ and where $\pi$ ranges over a given countable set~$\var$ of \emph{trace variables}. The other Boolean connectives and the temporal operators release~$\R$, eventually~$\F$, and always~$\G$ are derived as usual, due to closure under negation. A sentence is a closed formula, i.e., one without free trace variables.

The semantics of \hyltl is defined with respect to trace assignments that are a partial mappings~$\Pi \colon \var \rightarrow (\pow{\ap})^\omega$. The assignment with empty domain is denoted by $\Pi_\emptyset$. Given a trace assignment~$\Pi$, a trace variable~$\pi$, and a trace~$t$, denote by $\Pi[\pi \rightarrow t]$ the assignment that coincides with $\Pi$ everywhere but at $\pi$, which is mapped to $t$. Further, $\suffix{\Pi}{i}$ denotes the assignment mapping every $\pi$ in $\Pi$'s domain to $\Pi(\pi)[i,\infty) $.
For teams~$T$ and trace-assignments~$\Pi$ we define 

\begin{tabbing}
	$(T, \Pi) \hyltlmodels  \psi_1 \lor \psi_2 $ \= if \= Rechts \kill
	$(T, \Pi) \hyltlmodels p_\pi$\> if \> $p \in \Pi(\pi)(0)$,\\
	$(T, \Pi) \hyltlmodels  \neg \psi$\> if \> $(T, \Pi) \nhyltlmodels  \psi$,\\
	$(T, \Pi) \hyltlmodels  \psi_1 \lor \psi_2 $\> if \> $(T, \Pi) \hyltlmodels  \psi_1$ or $(T, \Pi) \hyltlmodels  \psi_2$,\\
	$(T, \Pi) \hyltlmodels  \X \psi$\> if \> $(T,\suffix{\Pi}{1}) \hyltlmodels  \psi$,\\
	$(T, \Pi) \hyltlmodels  \psi_1 \U \psi_2$\> if \> $\exists k \ge 0: (T,\suffix{\Pi}{k}) \hyltlmodels  \psi_2$ and
	$\forall 0 \le k' < k: (T,\suffix{\Pi}{k'}) \hyltlmodels  \psi_1$, \\
	$(T, \Pi) \hyltlmodels  \exists \pi.\psi$\> if \> $\exists t \in T: (T,\Pi[\pi \rightarrow t]) \hyltlmodels  \psi$, and \\
	$(T, \Pi) \hyltlmodels  \forall \pi.\psi$\> if \> $\forall t \in T: (T,\Pi[\pi \rightarrow t]) \hyltlmodels  \psi$. 
\end{tabbing}

We say that $T$ satisfies a sentence~$\phi$, if $(T, \Pi_\emptyset) \hyltlmodels  \phi$, and write $T \hyltlmodels  \phi$.
The semantics of \hyltl are synchronous, i.e., the semantics of the until refers to a single $k$. Accordingly, one could expect that \hyltl is closer related to \LTL under synchronous team semantics than to \LTL under asynchronous team semantics. In the following, we refute this intuition. 

Formally, a \hyltl sentence~$\phi$ and an \LTL formula~$\phi'$ under synchronous (asynchronous) team semantics are equivalent, if for all teams~$T$: $T \hyltlmodels \varphi$ if and only if $T \smodels \phi'$ ($T \amodels \phi'$).
In the following, let $\forall$-\hyltl denote that set of \hyltl sentences of the form~$\forall \pi.\, \psi$ with quantifier-free~$\psi$, i.e., sentences with a single universal quantifier.

  \begin{theorem}
  \label{theorem_hyltlvsteam}

  	\begin{enumerate}
  		\item\label{theorem_hyltlvsteam_teamweakerthanshyltl} No \LTL-formula under synchronous or asynchronous team semantics is equivalent to $\exists\pi.p_\pi$.
  		\item\label{theorem_hyltlvsteam_hyltlweakerthansynchrteam} No \hyltl sentence is equivalent to $\F p$ under synchronous team semantics. 
  		\item\label{theorem_hyltlvsteam_hyltlequalsasynchrteam} \LTL under asynchronous team semantics is as expressive as $\forall$-\hyltl.
  	\end{enumerate}
  \end{theorem}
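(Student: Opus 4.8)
The plan is to prove the three claims separately, since they concern incomparable aspects of expressivity. For each of the non-expressibility results (parts 1 and 2) I would exhibit a structural property that one logic respects but the other violates, and then derive a contradiction from an assumed equivalent formula. For part 3 I would set up an explicit back-and-forth translation and verify it by induction.

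For part~\ref{theorem_hyltlvsteam_teamweakerthanshyltl}, the key observation is that $\exists\pi.p_\pi$ is \emph{not} downward closed: the team $T=\{\{p\}\emptyset^\omega,\emptyset^\omega\}$ satisfies it (witnessed by the first trace), but the subteam $T'=\{\emptyset^\omega\}\subseteq T$ does not. Since both synchronous and asynchronous team semantics for \LTL are downward closed (see Figure~\ref{fig:structuralpropertiesoverview}), no \LTL-formula under either semantics can agree with $\exists\pi.p_\pi$ on both $T$ and $T'$. I would spell this out: if $\phi'$ were equivalent, then $T\starmodels\phi'$ would force $T'\starmodels\phi'$ by downward closure, contradicting $T'\not\hyltlmodels\exists\pi.p_\pi$.

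For part~\ref{theorem_hyltlvsteam_hyltlweakerthansynchrteam}, I would exploit that \hyltl is insensitive to multiplicities and, more usefully, that a single universal/existential quantifier structure cannot capture the synchronous coordination in $\F p$. The cleanest route is to find two teams that every \hyltl sentence must treat identically but that $\F p$ (under $\smodels$) separates. A natural candidate pair is $T_1=\{(\{p\}\emptyset^\omega)\}$ together with the shifted-occurrence team versus a team where $p$ never occurs synchronously; concretely, compare $T=\{\{p\}\emptyset^\omega,\emptyset\{p\}\emptyset^\omega\}$ with the singleton teams obtained from its traces. Under $\smodels$ we have $T\not\smodels\F p$ by Example~\ref{example_semantics}, whereas each singleton does satisfy it. The main work is to argue that \hyltl cannot make the distinction $\smodels$ makes, which I expect to require a small Ehrenfeucht–Fra\"iss\'e-style or direct semantic argument showing that the quantifiers of any fixed \hyltl sentence, evaluated over these finite teams, collapse the synchronous timing information; this is the step I anticipate being the main obstacle, since one must quantify over \emph{all} \hyltl sentences rather than a fixed one.

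For part~\ref{theorem_hyltlvsteam_hyltlequalsasynchrteam}, I would use the characterisation already established in the excerpt: $T\amodels\phi$ iff $\forall t\in T:\, t\ltlmodels\phi$. A $\forall$-\hyltl sentence $\forall\pi.\psi$ with quantifier-free $\psi$ is satisfied by $T$ iff every $t\in T$ satisfies $\psi$ read as a classical single-trace \LTL formula (since after fixing $\pi\mapsto t$ there are no further quantifiers and the only atoms are $p_\pi$, so $\psi$ becomes an ordinary \LTL formula $\psi^{\mathrm{LTL}}$ over that one trace). Thus $\forall\pi.\psi$ is equivalent to $\psi^{\mathrm{LTL}}$ under asynchronous team semantics, giving one inclusion; conversely, any \LTL formula $\phi$ translates to $\forall\pi.\phi[p\mapsto p_\pi]$, giving the other. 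I would phrase this as two explicit translations and verify each by appealing to the flatness/singleton-equivalence characterisation rather than re-proving an induction from scratch. The only mild care needed is checking that the replacement of atoms and Boolean/temporal operators commutes correctly between the two syntaxes, which is routine.
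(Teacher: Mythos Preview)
Parts~\ref{theorem_hyltlvsteam_teamweakerthanshyltl} and~\ref{theorem_hyltlvsteam_hyltlequalsasynchrteam} of your proposal are correct and match the paper's argument essentially verbatim: downward closure kills $\exists\pi.p_\pi$, and flatness plus singleton equivalence gives the two-way translation with $\forall$-\hyltl.

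Part~\ref{theorem_hyltlvsteam_hyltlweakerthansynchrteam} has a genuine gap. The concrete candidate you sketch does not work: \hyltl easily distinguishes the two-trace team $T=\{\{p\}\emptyset^\omega,\emptyset\{p\}\emptyset^\omega\}$ from each of its singleton subteams (for instance $\forall\pi.\,p_\pi$ separates them), so there is no hope of arguing that ``every \hyltl sentence must treat them identically''. More fundamentally, the property expressed by $\F p$ under synchronous semantics is ``there exists a single time point at which \emph{all} traces see $p$'', and no finite pair of teams will witness its \hyltl-inexpressibility; one needs an infinite family. A workable argument runs roughly as follows: for each $n$ let $T_n=\{\emptyset^i\{p\}^\omega\mid i\le n\}$ and let $T_\infty=\bigcup_n T_n$. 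Then $T_n\smodels\F p$ for every $n$ but $T_\infty\nsmodels\F p$. Any \hyltl sentence has some fixed number~$k$ of trace quantifiers, and every $k$-tuple of traces drawn from $T_\infty$ already occurs in some $T_n$; a compactness-style argument then shows the sentence cannot separate $T_\infty$ from all the $T_n$. This is precisely the kind of reasoning carried out by Bozzelli, Maubert, and Pinchinat, and the paper in fact does not reprove it but simply cites their result~\cite{BozzelliMP15}. You were right to flag this step as the main obstacle; your current sketch does not clear it.
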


  \begin{proof}
	\ref{theorem_hyltlvsteam_teamweakerthanshyltl}. Consider $T = \set{\emptyset^\omega, \set{p} \emptyset^\omega}$. We have $T \hyltlmodels  \exists\pi.p_\pi$. Assume there is an equivalent \LTL formula under team semantics, call it $\phi$. Then, $T \starmodels \phi$ and thus $\set{\emptyset^\omega} \starmodels \phi$ by downwards closure. Hence, by equivalence, $\set{\emptyset^\omega} \hyltlmodels  \exists\pi.p_\pi$, yielding a contradiction.

  \ref{theorem_hyltlvsteam_hyltlweakerthansynchrteam}. Bozzelli et al.\ proved that the property encoded by $\F p$ under synchronous team semantics cannot be expressed in \hyltl~\cite{BozzelliMP15}.

	\ref{theorem_hyltlvsteam_hyltlequalsasynchrteam}. Let $\varphi$ be an \LTL-formula and define $\varphi_h \dfn \forall \pi.\phi'$, where $\varphi'$ is obtained from $\varphi$ by replacing each atomic proposition~$p$ by $p_\pi$. Then, due to singleton equivalence, $T \amodels \varphi$ if and only if $T \hyltlmodels \varphi_h$.
	For the other implication, let $\varphi = \forall \pi.\psi$ be a \hyltl sentence with quantifier-free $\psi$ and let $\psi'$ be obtained from $\psi$ by replacing each atomic proposition~$p_\pi$ by $p$. Then, again due to the singleton equivalence, we have $T \hyltlmodels \varphi$ if and only if $T \amodels \psi'$.
    \end{proof}
  
Note that these separations are obtained by very simple formulas, and are valid for $\LTL(\dep)$ formulas, too. 
In particular, the \hyltl formulas are all negation-free.
   
  \begin{corollary}
  \hyltl and \LTL under synchronous team semantics are of incomparable expressiveness and
  \hyltl is strictly more expressive than \LTL under asynchronous team semantics.
  \end{corollary}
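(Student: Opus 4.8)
The plan is to treat the three parts independently: the two separations in items~\ref{theorem_hyltlvsteam_teamweakerthanshyltl} and \ref{theorem_hyltlvsteam_hyltlequalsasynchrteam} I would prove directly by exploiting the structural properties collected in Figure~\ref{fig:structuralpropertiesoverview} (downwards closure, flatness, singleton equivalence), while for item~\ref{theorem_hyltlvsteam_hyltlweakerthansynchrteam} I would import the deep non-expressibility result of Bozzelli et~al.

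For item~\ref{theorem_hyltlvsteam_teamweakerthanshyltl}, the key observation is that the hyperproperty defined by $\exists\pi.p_\pi$ is \emph{not} downwards closed, whereas every \LTL formula is downwards closed under both team semantics. Concretely, I would fix the witness team $T = \set{\emptyset^\omega, \set{p}\emptyset^\omega}$, which satisfies $\exists\pi.p_\pi$ via the trace $\set{p}\emptyset^\omega$, while its subteam $\set{\emptyset^\omega}$ does not. If some \LTL formula $\phi$ were equivalent to $\exists\pi.p_\pi$ under $\starmodels$, then $T \starmodels \phi$ would force $\set{\emptyset^\omega}\starmodels\phi$ by downwards closure, hence $\set{\emptyset^\omega}\hyltlmodels\exists\pi.p_\pi$ by equivalence, a contradiction. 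Since downwards closure holds for both semantics, this single argument covers the synchronous and the asynchronous case simultaneously.

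Item~\ref{theorem_hyltlvsteam_hyltlequalsasynchrteam} I would prove by reducing both formalisms to the same trace condition, namely \myquot{every trace of $T$ classically satisfies a fixed \LTL formula}. For the forward direction, flatness together with singleton equivalence gives $T \amodels \phi$ if and only if $t \ltlmodels \phi$ for all $t \in T$; translating $\phi$ into $\forall\pi.\phi'$ by replacing each $p$ with $p_\pi$ then yields a $\forall$-\hyltl sentence with exactly this meaning. For the converse, given $\forall\pi.\psi$ with quantifier-free $\psi$, the crucial point is that the evaluation of a quantifier-free \hyltl formula under an assignment of the form $\Pi[\pi\rightarrow t]$ depends only on the single trace $t$ and not on the ambient team $T$; hence $T \hyltlmodels \forall\pi.\psi$ amounts to every $t\in T$ satisfying the \LTL formula $\psi'$ obtained by replacing $p_\pi$ with $p$, and singleton equivalence again bridges to $\amodels$.

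The genuinely hard ingredient is item~\ref{theorem_hyltlvsteam_hyltlweakerthansynchrteam}, the assertion that $\F p$ under synchronous team semantics escapes \hyltl, and I do not intend to reprove it from scratch: I would simply invoke the non-expressibility theorem of Bozzelli et~al.~\cite{BozzelliMP15}. The only obstacle worth flagging in the parts I prove myself is the bookkeeping in item~\ref{theorem_hyltlvsteam_hyltlequalsasynchrteam}, where one must check that the syntactic replacement of atoms is faithful in both directions and that no hidden dependence on the ambient team $T$ survives in the quantifier-free body; this becomes routine once the flatness and singleton-equivalence entries of Figure~\ref{fig:structuralpropertiesoverview} are in hand.
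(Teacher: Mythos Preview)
Your proposal is correct and mirrors the paper's approach exactly: the paper derives this corollary from Theorem~\ref{theorem_hyltlvsteam}, whose three items you reprove with the same witness team~$T=\set{\emptyset^\omega,\set{p}\emptyset^\omega}$ and downwards closure for item~\ref{theorem_hyltlvsteam_teamweakerthanshyltl}, the citation of Bozzelli et~al.\ for item~\ref{theorem_hyltlvsteam_hyltlweakerthansynchrteam}, and the syntactic $p\leftrightarrow p_\pi$ translation combined with flatness and singleton equivalence for item~\ref{theorem_hyltlvsteam_hyltlequalsasynchrteam}. The only cosmetic difference is that you invoke flatness explicitly in item~\ref{theorem_hyltlvsteam_hyltlequalsasynchrteam} where the paper says only ``singleton equivalence''; your version is in fact the more precise phrasing, since passing from $T\amodels\phi$ to a per-trace statement is flatness, and the further passage to $\ltlmodels$ is singleton equivalence.
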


\section{Conclusion}
We introduced synchronous and asynchronous team semantics for linear temporal logic $\LTL$, studied complexity and expressive power of related logics, and compared them to $\hyltl$.
We concluded that \LTL under team semantics is a valuable logic which allows to express relevant hyperproperties and complements the expressiveness of \hyltl while allowing for computationally simpler decision problems.
We conclude with some directions of future work and open problems.
\begin{enumerate}
\item We showed that some important properties that cannot be expressed in $\hyltl$ (such as uniform termination) can be expressed by $\LTL$-formulas in synchronous team semantics. Moreover input determinism can be expressed in $\LTL(\dep)$. What other important and practical hyperproperties can be expressed in $\LTL$ under team semantics? What about in its extensions with dependence, inclusion, and independence atoms, or the contradictory negation.
\item We showed that with respect to expressive power $\hyltl$ and $\LTL$ under synchronous team semantics are incomparable. What about the extensions of $\LTL$ under team semantics? For example the $\hyltl$ formula $\exists \pi.p_\pi$ is expressible in $\LTL(\sim)$. Can we characterise the expressive power of relevant extensions of team $\LTL$ as has been done in first-order and modal contexts?
\item We studied the complexity of path-checking, model checking, and satisfiability problems of team $\LTL$ and its extensions with dependence atoms and the contradictory negation. Many problems are still open: Can we show matching upper bounds for the hardness results of Section \ref{sec:extensions}? What is the complexity of $\TMCs$ when splitjunctions are allowed?
What happens when $\LTL$ is extended with inclusion or independence atoms?
\item Can we give a natural team semantics to $\CTL^*$ and compare it to $\mathrm{HyperCTL}^*$ \cite{DBLP:conf/post/ClarksonFKMRS14}?
\end{enumerate}
%

\printbibliography

\end{document}